\documentclass[11pt]{article}

\newcommand{\OutputPart}{all}

\usepackage[utf8]{inputenc}
\usepackage[T1]{fontenc}
\usepackage{amsmath, amssymb, amsthm}
\usepackage{braket}
\usepackage{graphicx}
\usepackage{float}
\usepackage{subcaption}
\usepackage{booktabs}
\usepackage{algorithm}
\usepackage{algpseudocode}
\usepackage[numbers,sort&compress]{natbib}
\usepackage[dvipsnames]{xcolor}
\usepackage{tikz}
\usetikzlibrary{positioning, arrows.meta, fit, backgrounds}
\definecolor{qrDark}{HTML}{2E2A6B}
\definecolor{qrMid}{HTML}{3B3F8F}
\definecolor{qrLeaf}{HTML}{EEF0FA}
\usepackage[colorlinks=true,
  linkcolor=blue,
  citecolor=ForestGreen,
  urlcolor=blue,
  hypertexnames=false]{hyperref}
\usepackage{cleveref}

\makeatletter
\let\NAT@orig@def@last@yr\def@NAT@last@yr
\def\def@NAT@last@yr#1{\NAT@orig@def@last@yr{--\NAT@penalty}}
\makeatother

\crefname{figure}{Figure}{Figures}
\crefname{section}{Section}{Sections}
\crefname{appendix}{Appendix}{Appendices}
\crefname{equation}{Eq.}{Eqs.}
\Crefname{equation}{Equation}{Equations}
\crefname{table}{Table}{Tables}
\crefname{algorithm}{Algorithm}{Algorithms}
\crefname{theorem}{Theorem}{Theorems}
\crefname{lemma}{Lemma}{Lemmas}
\crefname{corollary}{Corollary}{Corollaries}
\crefname{proposition}{Proposition}{Propositions}
\crefname{definition}{Definition}{Definitions}
\crefname{remark}{Remark}{Remarks}

\makeatletter
\edef\@tempa{\noexpand\in@{,\OutputPart,}{,all,paper,supp,}}\@tempa
\ifin@\else
  \@latex@error{Unknown \string\OutputPart\space `\OutputPart'}%
    {Set \string\OutputPart\space to all, paper or supp.}
\fi
\newif\if@droppart
\AddToHook{shipout/before}{\if@droppart\DiscardShipoutBox\fi}
\let\@deferred@protected@write\protected@write
\long\def\@immediate@protected@write#1#2#3{%
  \begingroup
    #2%
    \let\protect\@unexpandable@protect
    \edef\reserved@a{\immediate\write#1{#3}}%
    \let\protect\noexpand
    \reserved@a
  \endgroup
  \if@nobreak\ifvmode\nobreak\fi\fi}
\newcommand{\BeginPart}[1]{%
  \clearpage
  \edef\@tempa{\noexpand\in@{,\OutputPart,}{,all,#1,}}\@tempa
  \ifin@
    \global\@droppartfalse
    \global\let\protected@write\@deferred@protected@write
    \hypersetup{bookmarksdepth}%
    \edef\@tempa{\OutputPart}\def\@tempb{all}%
    \ifx\@tempa\@tempb\else \setcounter{page}{1}\fi
  \else
    \global\@dropparttrue
    \global\let\protected@write\@immediate@protected@write
    \hypersetup{bookmarksdepth=-10}%
  \fi}
\makeatother

\makeatletter
\newif\if@splitbib        
\newif\if@suppbibentries  
\newif\if@suppbib         
\newif\if@bibskipping     
\def\citation#1{\@for\@citeb:=#1\do{\global\@namedef{paper@cited@\@citeb}{}}}
\def\bibdata#1{%
  \global\@splitbibtrue
  \gdef\citation##1{\@for\@citeb:=##1\do{%
    \@ifundefined{paper@cited@\@citeb}{\global\@suppbibentriestrue}{}}}}
\let\@natbib@lbibitem\@lbibitem
\def\@lbibitem[#1]#2{%
  \@endbibskip
  \@tempswatrue
  \if@splitbib
    \@ifundefined{paper@cited@#2}{\@tempswafalse}{}%
    \if@suppbib
      \if@tempswa \@tempswafalse \else \@tempswatrue \fi
    \fi
  \fi
  \if@tempswa
    \expandafter\@natbib@lbibitem
  \else
    \expandafter\@startbibskip
  \fi
  [{#1}]{#2}}
\def\@startbibskip[#1]#2{%
  \global\advance\c@NAT@ctr\@ne
  \global\@bibskippingtrue
  \setbox\z@\vbox\bgroup}
\def\@endbibskip{\if@bibskipping \global\@bibskippingfalse \egroup \fi}
\AddToHook{env/thebibliography/end}{\@endbibskip}
\newcommand{\SupplementalReferences}{%
  \if@suppbibentries
    \begingroup
      \@suppbibtrue
      \renewcommand{\refname}{Supplemental References}%
      \@input@{\jobname.bbl}%
    \endgroup
  \fi}
\makeatother

\theoremstyle{definition}
\newtheorem{definition}{Definition}
\theoremstyle{plain}
\newtheorem{lemma}{Lemma}
\newtheorem{theorem}{Theorem}
\newtheorem{proposition}{Proposition}
\newtheorem{corollary}{Corollary}
\theoremstyle{remark}
\newtheorem{remark}{Remark}

\newcommand{\Tree}{\mathcal{T}}   
\newcommand{\edges}{E}            
\newcommand{\bigO}{O}             
\newcommand{\Fswap}{F_{\mathrm{swap}}}
\newcommand{\Fpur}{F_{\mathrm{pur}}}
\newcommand{\Fth}{F_{\mathrm{th}}}
\newcommand{\fin}{f_{\mathrm{in}}}

\makeatletter
\let\standardfnsymbol\@fnsymbol
\renewcommand{\@fnsymbol}[1]{%
  \ifcase#1\or\textdagger\or\textasteriskcentered\else\standardfnsymbol{#1}\fi}
\makeatother

\title{One-shot Routing in Quantum Networks}

\author{
  Nadav Lavi\thanks{These authors contributed equally.}\hspace{0.3em}\thanks{Viterbi Faculty of Electrical and Computer Engineering, Technion -- Israel Institute of
  Technology, Haifa 32000, Israel.} \and
  Nir Gutman\footnotemark[1]\hspace{0.3em}\footnotemark[2] \and
  Ido Kaminer\footnotemark[2] \and
  Ariel Orda\footnotemark[2]
}

\date{}

\begin{document}

\BeginPart{paper}
\maketitle


\begin{quote}
\small
\noindent\textbf{ABSTRACT}\quad
Distributed quantum computation requires many entangled pairs to be available
simultaneously, so routing must optimize fidelity from a fixed, short-lived set
of network resources rather than the rate of pairs accumulated over time. We
formulate this one-shot routing problem for heterogeneous Werner-state
links and jointly optimize path selection and the schedule of entanglement
swapping and distillation. We show that the common purify-then-swap ordering is
not optimal network-wide and that exact schedule enumeration requires
$m^{\Theta(m)}$ time.
We introduce Fidelity-Optimizing Local Detours (FOLD), a
heuristic that starts from the highest-fidelity path and merges a detour only
when distillation improves on both routes and raises end-to-end fidelity.
FOLD runs in polynomial time, recovers almost all of the optimal fidelity on
tractable instances, and is four orders of magnitude faster than exact
optimization at circuit rank $5$.
On the Internet topologies where any strategy improves on shortest-path
routing, FOLD achieves the highest mean end-to-end fidelity among the
polynomial baselines.
We further extend the formulation to links whose fidelities are known only as
distributions, where uncertainty, rather than the effort spent routing, limits
the fidelity delivered.
Through an example of requests competing for the same links, we show that
coordination becomes a condition for feasibility, and that a shared path budget
controls the fidelity trade-off between them.
\end{quote}

\section{Introduction}
\label{sec:introduction}

Quantum computers have advanced tremendously over the last few years
\citep{arute2019supremacy, acharya2025surfacecode}, and promise to outperform classical
computers on problems such as computational search \citep{grover1996search} and the
simulation of physical systems \citep{feynman1982simulating, lloyd1996universal}. Realizing
this potential at scale requires connecting multiple such computers into a
quantum network \citep{wehner2018vision}, a first step toward the broader vision
of a quantum internet \citep{kimble2008internet}. Connecting nodes over such a
network enables tasks
with no classical counterpart, among them distributed quantum computation \citep{cirac1999distributed,
caleffi2024distributed}, distributed leader election
\citep{tani2012leader} and secure multi-party computation \citep{crepeau2002smpc}.
A full protocol stack for quantum networks now exists \citep{kozlowski2020designing}, and an operating system runs platform-independent applications on real network nodes \citep{delledonne2025operating}.

Quantum networks rely on distributing entanglement across lossy, noisy links, where repeater-based
protocols use local entangled pairs, entanglement swapping \citep{zukowski1993swapping}, and distillation
\citep{bennett1996purification, deutsch1996privacy} to create long-distance high-fidelity Bell pairs for end users (\cref{fig:two_regimes}(a)). Teleportation then consumes those pairs to transfer data \citep{bennett1993teleporting}, and their fidelity sets its reliability.
Making these operations work together at network scale is an active field of research \citep{vanmeter2009, shi2020concurrent, li2021routing, li2022fidelity, wang2023scheduling, liu2025joint}, and the routing layer decides which pairs to combine and in what order.

\begin{figure}[t]
  \centering
  \includegraphics[width=\columnwidth]{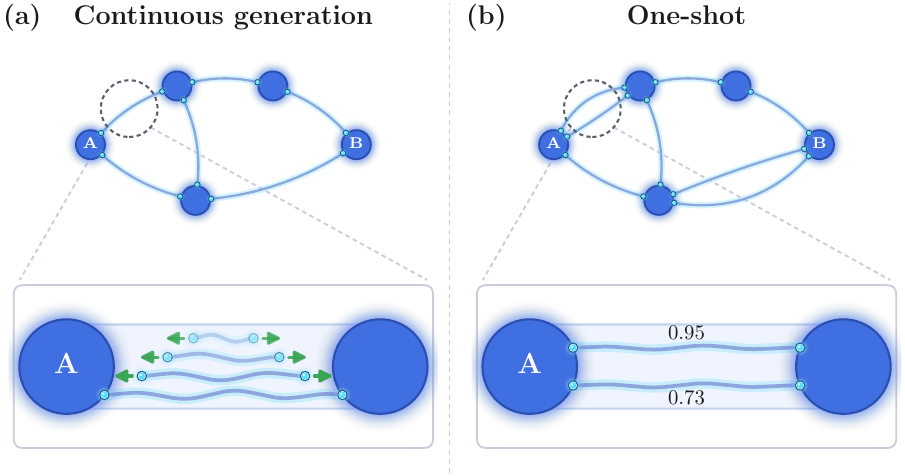}
  \caption{\textbf{Two resource regimes of a quantum network, and the one we
  solve.} Each panel draws the same five-node network, with one of its links
  magnified below. \textbf{(a)} Under continuous entanglement generation, a
  link is a repeater channel that keeps distributing pairs. A pair separates
  inside the channel and one half travels to each node. The pair at the bottom
  has already arrived. Fresh pairs keep coming, so one is as good as another
  and routing maximizes the rate delivered. The channel is the resource, which
  is why a link is drawn as a single edge. \textbf{(b)} One-shot operation is
  the regime we study. The links hold whatever they have
  produced by now, and nothing follows. Nothing travels in the magnified link,
  which holds two pairs that arrived earlier, of fidelity $0.95$ and $0.73$.
  Each pair now carries a value of its own. The snapshot holds eight pairs in
  all, two of the links carrying two apiece, and routing maximizes the
  end-to-end fidelity they reach together. They admit $6{,}862$ sequences of
  swapping and distillation, whose best reaches $F(A,B)=0.821$ by interleaving
  the two operations, ahead of the best purify-then-swap ordering at $0.812$.}
  \label{fig:two_regimes}
\end{figure}

Routing over such a network cannot borrow directly from its classical counterpart. Classical link-state protocols \citep{mcquillan1980, moy1998ospf, fortz2000ospf} rest on copying data, sharing a link, and letting memory hold data until a link frees up. All three fail for quantum networks, which cannot copy entanglement \citep{wootters1982nocloning} or resend it once lost. Every state a route consumes must therefore be available simultaneously.

The literature on entanglement routing focuses mainly on the \emph{rate} of pairs whose fidelity exceeds a threshold $\Fth$, measured by throughput \citep{shi2020concurrent, li2021routing, li2022fidelity, tan2026distributed}, time \citep{goodenough2021, wang2023scheduling,
inesta2023optimal} or resource cost \citep{liu2025joint,
halder2024optimal, jia2024routing}. This objective comes from quantum key distribution \citep{bennett1984qkd}, where it is natural: keys accumulate pair by pair over time, and privacy amplification \citep{deutsch1996privacy} averages away the damage done by imperfect ones. However, distributed quantum computation requires the opposite. Teleporting a multi-qubit register \citep{bennett1993teleporting} or sharing an entangled state among distant processors requires many pairs held \emph{simultaneously}, with no analog of privacy amplification, and a lost or degraded pair cannot be replaced. Single-time-frame fidelity, not rate, bounds the scale of the distributed computation. Such an application therefore needs the fidelity attainable across the pairs it consumes at once -- the quantity a rate objective constrains at $\Fth$ rather than maximizes.

This calls for a different view of the network.
Since entanglement generation and swapping succeed only probabilistically, the pairs available at any moment form a short-lived subset of the network's links, known to the nodes and gone once their memories decohere.
The object is therefore a finite set of pairs, fixed in size and fidelity, to be combined by distillation and swapping within the coherence time, not a rate accumulated over many.

Here we define and study entanglement routing in a \emph{one-shot} network.
Instead of assuming a constant rate of states replenished by quantum repeaters, we fix the set of pre-distributed states the network holds at a given moment
(\cref{fig:two_regimes}(b)) and
ask: given the entangled pairs available now, which paths and which schedule of distillation and swapping
attain the highest end-to-end fidelity between two users? We optimize end-to-end fidelity because quantum teleportation carries the messages.
We show that the purify-then-swap ordering assumed by much of the literature is not optimal network-wide, and that exact optimization does not scale: our optimal solver enumerates every valid schedule and
runs in $m^{\Theta(m)}$ worst-case time, where $m$ is the number of edges. We analyze optimal solutions across many topologies, and \cref{sec:routing} uses these findings to design an algorithm that outperforms existing approaches. The same model accommodates uncertain edge fidelities (\cref{sec:uncertainty}) and several simultaneous requests (\cref{sec:requests}).

Most work on quantum networks addresses single-path routing \citep{shi2020concurrent,li2021routing,halder2024optimal} or the scheduling of operations within a given path \citep{goodenough2021, fan2025distribution, vanmeter2009, jia2024routing, liu2025joint, wang2023scheduling}. To our knowledge, no existing method chooses several routes and the ordering of swapping and distillation together. Our proposed algorithm FOLD does both at once. On any graph it selects a multi-path and the schedule that combines it, so the ordering of operations follows the graph rather than being fixed in advance.

Building on this analysis, we establish FOLD (Fidelity-Optimizing Local Detours), a scalable heuristic for the same problem (\cref{sec:heuristic_algorithm}). FOLD runs in $\bigO\!\big(r\,(n^{2}+m)+n\,m\log n\big)$ time
(\cref{sec:heuristic_complexity}),
where $n$ is the number of nodes, $m$ the number of edges, and $r=m-n+c$ the circuit rank of
the graph, since each merge it performs consumes one independent cycle. It
recovers almost all of the fidelity the optimal algorithm attains while running four orders of
magnitude faster at circuit rank $5$, where the exact solver already takes about an hour per
instance. We benchmark FOLD on the Internet topologies of the Topology
Zoo~\citep{knight2011zoo} (\cref{sec:benchmark}), where it doubles the fidelity gain of the strongest
reference. It does not always reach the optimum. On \emph{reducible} topologies
it recovers about four fifths of the optimal fidelity gain, slightly behind a
recursive algorithm that consumes such a graph entirely~\citep{meng2023series}.
Most deployed topologies are not reducible, so this case is rare in practice.
\section{One-shot Network Model and Problem Statement}
\label{sec:model}

Our model maximizes the fidelity between two users of the network: given a graph, which sequence of swapping and distillation attains the best end-to-end connection quality, and how do we find that sequence efficiently?
In quantum networks, \emph{swapping} and \emph{distillation} establish end-to-end entanglement (\cref{fig:operations}(a)). Different schedules of these operations may establish the same end-to-end connection with different fidelities (\cref{fig:operations}(b)).
The challenge is that the space of feasible routing configurations becomes intractable even for small graphs.

A one-shot quantum network is a graph $G=(V,E)$, with $n=|V|$ nodes and $m=|E|$ edges, where each edge $(u,v) \in E$ is a 2-qubit entangled state with one qubit held at each of its endpoints (\cref{fig:two_regimes}(b)).
Studies have assumed a constant flow of entangled states along each edge, continuously replenished by quantum repeaters.
Instead, in our study each link holds a fixed, finite number of entangled states, which raises a new algorithmic routing challenge.
We use the term \emph{edge} for the entangled state between two nodes, and the term \emph{link} for several parallel edges between the same two nodes (\cref{fig:operations}(b)-(c)).
We call the two nodes that wish to communicate the \emph{source} and the \emph{target}, and reserve \emph{endpoints} for the two nodes joined by a given edge.

Quantum teleportation carries a data qubit across the network by consuming a shared entangled pair \citep{bennett1993teleporting}, and it does so faithfully only from maximally entangled states. A pair distributed through a noisy channel is instead mixed.

\begin{figure}[H]
  \centering
  \includegraphics[width=\columnwidth]{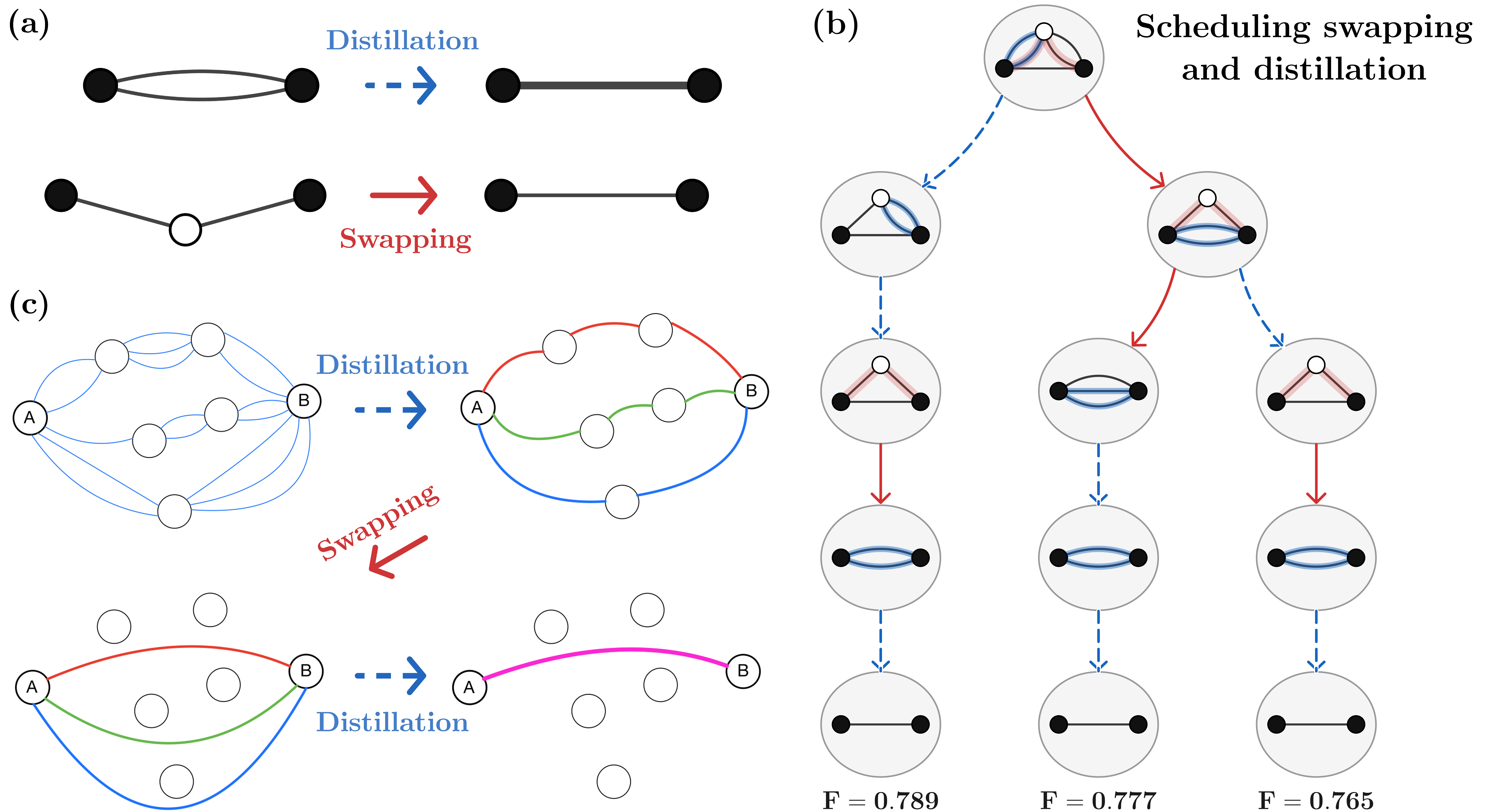}
  \caption{\textbf{Scheduling of swapping and distillation operations in
  quantum networks.} \textbf{(a)} \emph{Distillation} consumes two parallel
  edges between the same pair of nodes to produce a single higher-fidelity
  edge, while \emph{swapping} consumes two edges meeting at an intermediate
  node to produce a direct edge between their outer endpoints. \textbf{(b)} Two
  users share a direct edge of fidelity $0.745$ and a two-hop route through a
  repeater, each hop a link of two edges of fidelity $0.83$. Every branch
  reduces this graph to a single end-to-end connection through four operations,
  swapping (red, solid) and distillation (blue, dashed), and every branch
  spends all five edges. The schedules still deliver different fidelities.
  Distilling inside each link before swapping reaches $F=0.789$. Swapping both
  routes first and merging them afterwards reaches $F=0.777$. The right branch
  performs the same operations as the middle one and merges the direct edge
  earlier, which costs a further $0.012$. Fidelity therefore depends on which
  states an operation combines, not on how many operations a schedule runs. We
  enumerate this space exactly, and propose FOLD, a polynomial-time heuristic
  that outperforms existing scheduling approaches (\cref{sec:routing}).
  \textbf{(c)} A larger example: distilling the parallel edges of each link,
  swapping the resulting links along the paths, and finally distilling the two
  $A$--$B$ edges together collapses the whole graph into one high-fidelity
  end-to-end connection.}
  \label{fig:operations}
\end{figure}

We measure the quality of an entangled state by its \emph{fidelity} to the singlet Bell state $\ket{\Psi^-}$,
\begin{equation}
  F = \bra{\Psi^-} \rho \ket{\Psi^-},
  \label{eq:fidelity_def}
\end{equation}
the overlap between the shared, generally noisy, two-qubit state $\rho$ and this ideal target, which takes values $0\le F\le 1$.
Fidelity is therefore the natural objective for a quantum network: the closer $F$ is to $1$, the more faithfully a pair teleports a qubit.
Each edge in our model is a Werner state, a mixed state of the form:
\begin{equation}
    W_F = F \ket{\Psi^-}\bra{\Psi^-} + \frac{1-F}{3}(\ket{\Psi^+}\bra{\Psi^+} + \ket{\Phi^+}\bra{\Phi^+} + \ket{\Phi^-}\bra{\Phi^-}),
    \label{eq:first_werner_form}
\end{equation}
where $F$ is its fidelity as defined in \cref{eq:fidelity_def}. For a Werner state, $F=1$ recovers the perfect $\ket{\Psi^-}$ Bell state and $F=\tfrac14$ the totally mixed two-qubit state. The state is entangled, and thus useful for communication, only for $F>\tfrac12$.

Werner states model noisy entanglement in quantum networks: they capture the fraction of $\ket{\Psi^-}$ in the total state, and they are the output of a depolarizing channel, the standard noise model.

We assign each edge $e \in E$ its own such state, with fidelity $f_e = \bra{\Psi^-} W_{f_e} \ket{\Psi^-}$.
These fidelities are \emph{heterogeneous}: distinct edges may carry arbitrarily different values of $f_e$, reflecting the different lengths, hardware and noise levels of the physical links that produced them, as well as the time each state has already spent in memory.
An instance of our problem is therefore a graph $G=(V,E)$ together with a fidelity assignment $\{f_e\}_{e \in E}$, and nothing in our analysis or algorithms assumes a common initial fidelity. Where we do fix a uniform value, it is only to isolate a single effect in a controlled experiment.
Throughout the paper we write $f_e$ for the fidelity of an individual edge and reserve the capital $F$ for the fidelity of a state produced by a protocol, such as the end-to-end fidelity $F(P)$ delivered along a path $P$.

\emph{Swapping} entangles two nodes by consuming two adjacent states that share
an intermediate node (\cref{fig:operations}(a)) \citep{zukowski1993swapping}.
\emph{Distillation}, often also called \emph{purification}, instead consumes
several noisy pairs and returns fewer pairs of higher fidelity
\citep{bennett1996purification}.
On Werner states each operation maps the fidelities $f_1,f_2$ of the two states
it consumes onto a single output fidelity,
\begin{align}
  \Fswap(f_1,f_2) &= f_1f_2 + \frac{(1-f_1)(1-f_2)}{3},
  \label{eq:swapping_fidelity}\\[2pt]
  \Fpur(f_1,f_2) &= \frac{f_1f_2+\frac{1}{9}(1-f_1)(1-f_2)}
       {f_1f_2+\frac{1}{3}(f_1+f_2-2f_1f_2)+\frac{5}{9}(1-f_1)(1-f_2)}.
  \label{eq:distillation_fidelity}
\end{align}
We adopt the BBPSSW relation of \cref{eq:distillation_fidelity}
\citep{bennett1996purification} throughout. This relation gives the fidelity of
the retained pair conditioned on the favorable BBPSSW measurement outcome.
Throughout the main analysis, we optimize end-to-end fidelity conditioned on
successful completion of the selected schedule.

Distillation raises \emph{both} inputs only when their fidelities are close
enough, so $\Fpur$ carves out a bounded \emph{purifiable region} that makes
distillation a resource to spend selectively rather than a step to apply
wherever the topology allows (\cref{fig:distillation_selectivity}(a),
analyzed in \cref{sec:purifiable_region}).
Both relations accept arbitrary inputs -- elementary edges, or states that
earlier operations produced -- so they compose recursively, and capital
arguments such as $\Fswap(F_1,F_2)$ mark an input of the second kind.
Sequencing swaps along a path $P$ collapses its edges into a single end-to-end
fidelity,
\begin{equation}
    F(P)=\frac{1}{4}\Big( 1+3 \prod_{e\in P} \frac{4f_e-1}{3} \Big),
    \label{eq:path_fidelity}
\end{equation}
with $p_e=\tfrac{4f_e-1}{3}$ the depolarizing parameter of edge $e$
(\cref{sec:operations} derives the three relations). The companion visualizer applies both operations interactively and traces the purifiable region~\citep{entanglenet_visualizer}.

For most of this work, we set aside the probability that the selected schedule
completes successfully and focus on its conditional output fidelity. For
applications in which achieving high state fidelity is the primary requirement,
this objective can be relevant even at the cost of a lower success probability.
Except for the distillation-order comparison in \cref{sec:distillation_order},
our results do not account for delivery probability, expected utility, or
entanglement-generation rate.
The figures in the rest of this paper omit the individual qubits and show only the edges that connect them.
\section{Routing}
\label{sec:routing}
\label{sec:optimal_brute_force_algorithm}

Routing decisions in a quantum network differ from their classical counterparts for the reasons given in \cref{sec:introduction}. Each graph edge supports one operation rather than being shared among many. No-cloning~\citep{wootters1982nocloning} prevents copying a data qubit for a later transmission, so all states required by a route must be available simultaneously.
Quantum routing therefore consists of two major parts: \emph{path selection} and \emph{scheduling}.
In \emph{path selection} the algorithm selects the path, or paths, connecting the source and target nodes, whereas in \emph{scheduling} it chooses the order of distillation and swapping along these paths; any operation on more than two states also requires its own granular ordering on the states.
This yields many possible strategies given any graph.
In this section we address all parts of routing in one-shot quantum networks and introduce FOLD, a new heuristic algorithm.
We employ the technique of multi-path routing in our algorithm to take advantage of all the available resources of a one-shot quantum network, as seen for example in \cref{fig:operations}(c).

To experiement with the optimal solutions, we also built an exact solver that composes every valid sequence of swapping and distillation and returns the best of them (\cref{lem:exhaust}). \cref{fig:strategy_histogram} shows what such an enumeration contains: every one of the possible schedules, each scored by the fidelity it delivers.
It also compares how different methods perform on this example graph.
The exhaustive algorithm runs at the price of an $m^{\Theta(m)}$ running time (\cref{thm:brute}). This confines it to small graphs -- even at circuit rank $5$, which nine edges already reach, a single instance takes about an hour to exhaustively search over all possible schedules.
This section reports the experiments we ran with it. Each asks one question about optimal solutions, and each answer shapes FOLD, the polynomial-time heuristic assembled in \cref{sec:heuristic_algorithm}. The implementation of both algorithms is available in the paper's GitHub repository~\citep{entanglenet_code}.

\begin{figure}[H]
  \centering
  \includegraphics[width=0.9\columnwidth]{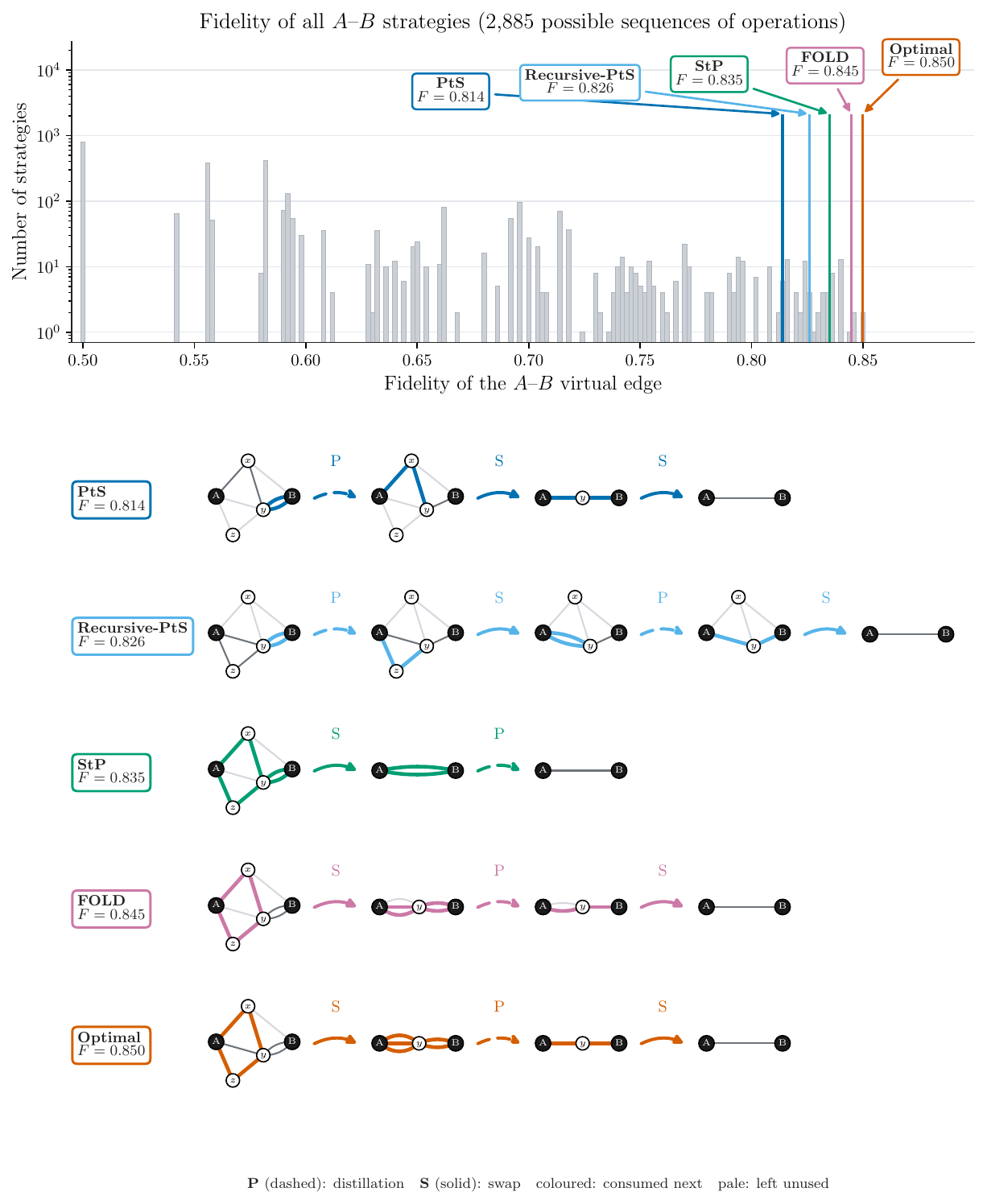}
  \caption{\textbf{Every schedule connecting two nodes, enumerated: the optimum
  interleaves swapping and distillation rather than following either canonical
  ordering, and FOLD nearly reaches it.}
  Fidelity of all $2{,}885$ operation sequences connecting $A$ and $B$ on the
  graph shown. Each marked strategy below is drawn as purify (P) and swap (S)
  operations, the edges an operation consumes colored and any unspent edge left
  pale. The strategies are purify-then-swap (PtS), swap-then-purify (StP), the
  recursive purify-and-swap reduction (\emph{Recursive},
  \cref{sec:benchmark})~\citep{meng2023series} and \emph{FOLD}, which leads all
  three. PtS, the ordering most schedulers assume, is the weakest here.}
  \label{fig:strategy_histogram}
\end{figure}

\subsection{Path Selection}
\label{sec:path_selection}

Routers in classical networks select a path that minimizes latency or the number of links used, both additive objectives, so a path costs the sum of its edges. 
Unlike these additive objectives, we maximize the end-to-end fidelity that swapping creates along a path. The edge fidelities combine multiplicatively along that path, so weighting each edge appropriately reduces the maximization to a shortest-path problem, as \citet{difranco2012optimal} do.
Specifically, rearranging \cref{eq:path_fidelity} and weighting each edge $e$ by $w_e=-\log(4f_e-1)+\log 3$ 
turns a product over the edges of a path into a sum,
\[
  \sum_{e\in P} w_e = \sum -\log(4f_e-1)+\log 3 = -\log \big( \prod_{e\in P}\frac{4f_e-1}{3} \big) =-\log\frac{4F(P)-1}{3}.
\]
This decreases the last term as $F(P)$ rises, so the lightest, i.e. shortest, path is the one of highest end-to-end fidelity.
This formulation brings us back to the realms of a shortest-path problem, where we can use Dijkstra's algorithm~\citep{dijkstra1959}.

This resembles Internet traffic engineering, which derives link weights that reduce its objective to shortest-path routing~\citep{fortz2000ospf,fortz2002traffic}. Protocols flood the weights so that every router runs Dijkstra locally~\citep{moy1998ospf}. 
Throughout this paper, \emph{shortest path} denotes a path whose fidelity is highest.

This reduction requires justification. \citet{difranco2012optimal} showed that for general mixed states, Dijkstra's algorithm cannot find the maximum-fidelity route: path fidelity may depend on two competing quantities, so extending a path can reverse which route is better, violating Bellman's optimality principle. The obstruction is asymmetry rather than mixedness. 
In contrast, the symmetry of the Werner states used in our model reduces the path fidelity (\cref{eq:path_fidelity}) to a single product. 
Werner states naturally describe Bell pairs affected by depolarizing noise: sending one half of $\ket{\Psi^-}$ through a depolarizing channel produces a Werner state~\citep{horodecki1999general}.
Such channels compose multiplicatively, and swapping two Werner links of depolarizing parameters $p_1$ and $p_2$ (\cref{eq:path_fidelity}) returns a Werner link of parameter $p=p_1p_2$.
The metric is closed under swapping, the operation routing performs. \citet{difranco2012optimal} identify pure states as one exception to the obstruction. The Werner family is a second exception.

Classical multipath routing judges a path set by \emph{independence}, \emph{quantity} and \emph{uniformity}, preferring comparable path costs to high variance~\citep{lee2002survey}. All three reappear with different emphasis. Independence becomes a hard constraint, each entangled edge being consumed by a single operation. The purifiable region imposes uniformity too, admitting a distillation only between close fidelities. And where classical multipath \emph{splits} traffic to add capacities, distillation here \emph{recombines} the paths into one connection of higher fidelity. Hence the question: since uniformity is what makes a merge possible at all, should the algorithm seek several similar routes, or one better route?

\cref{fig:path_selection} answers it in three settings of increasing size, and all three reach the same conclusion. \cref{fig:path_selection}(a) shows the simplest case: three parallel edges join the same pair of nodes, each already swapped down into a candidate route -- a better route $e_1$ of fidelity $f_1$ and two routes of equal or lower fidelity.
Sweeping the values $(f_2,f_3)$ below $f_1$, the optimum is always $e_1$ alone or all three distilled together, \emph{never} the weaker pair on its own. This holds even along $f_2=f_3$, where two \emph{equal} weaker routes overtake the single better one above a crossover fidelity of $0.67$ but still lose to distilling all three.
The triangle of \citet{difranco2012optimal} splits into three regions rather than two: direct path $P_1$, detour $P_2$, and a combined region where $P_1$ and $P_2$ are close enough in fidelity to be worth distilling (\cref{fig:path_selection}(b)). In the bipartite topology, where at most two of four candidate paths may be used at once, the optimum shifts steadily towards a single path as the spread of edge fidelities widens (\cref{fig:path_selection}(c)). Similarity makes two routes \emph{mergeable}, but it is no reason to \emph{choose} them, and in these examples the best route almost always belongs to the optimum.

\begin{figure}[H]
  \centering
  \includegraphics[width=\columnwidth]{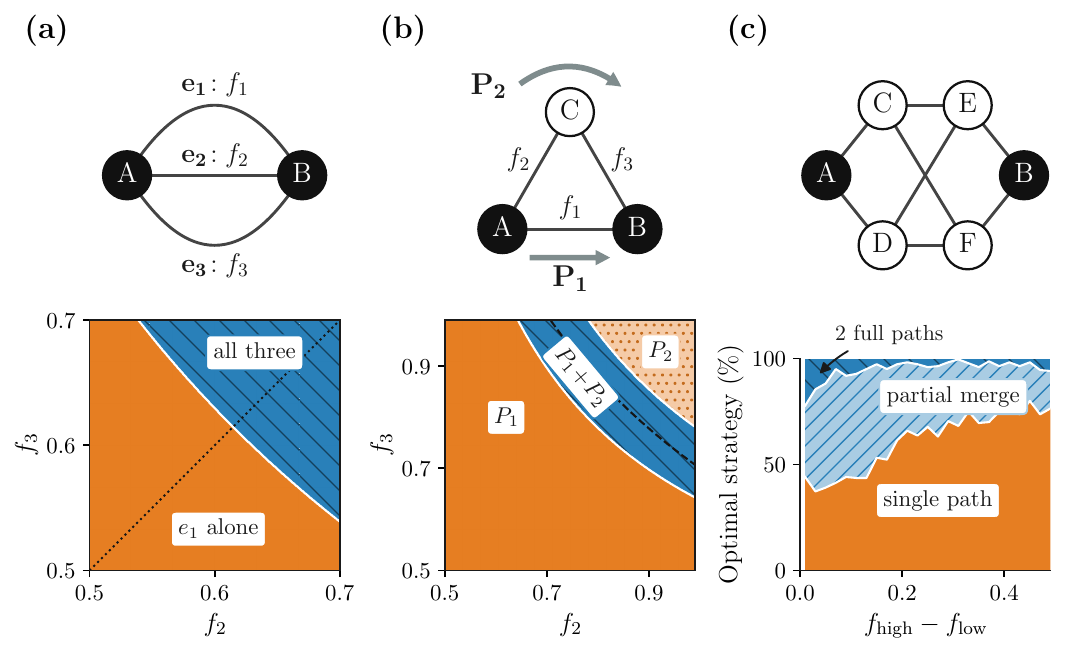}
  \caption{\textbf{Optimal solutions include the highest-fidelity route, and
  additional routes help only when their fidelities are close.} Across the three panels a
  warm hue marks an optimum that uses a single route on its own and blue one
  that combines routes by distillation.
  Each topology is sketched above the panel it belongs to.
  \textbf{(a)} Three parallel edges standing for candidate routes between the
  same pair of nodes -- a better route $e_1$ of fidelity $f_1=0.7$ and two
  weaker ones $e_2,e_3$. Over $(f_2,f_3)$ the optimum is either $e_1$ alone or
  all three distilled together, so distilling only the two weaker routes claims
  no region of the plane at all. The dotted line marks $f_2=f_3$.
  \textbf{(b)} The triangle of \citet{difranco2012optimal}, with the direct
  path $P_1$ held at $f_1=0.7$ and the detour $P_2$ obtained by swapping the
  edges of fidelity $f_2$ and $f_3$. The plane splits into $P_1$, $P_2$ and a
  combined band lying between them, the dashed line marking where $P_1$ and
  $P_2$ deliver equal fidelity. \textbf{(c)} The bipartite topology, in which
  Alice and Bob are joined by four candidate paths of which at most two may be
  used at once, either disjoint (2 full paths) or sharing an intermediate node
  (a partial merge). Over $5{,}500$ optimal solutions, $220$ per bin, with
  every edge fidelity drawn uniformly from a window of width
  $f_{\mathrm{high}}-f_{\mathrm{low}}$, the single-path share grows from $44\%$ to $76\%$ as that
  window widens.}
  \label{fig:path_selection}
\end{figure}

The reduction to a shortest-path problem assumed a route draws on a single chain of edges, each spent once, its fidelity depending only on that chain. Distillation breaks this assumption: \Cref{eq:distillation_fidelity} is neither additive nor multiplicative, so once distillation is allowed a route has no fidelity of its own: its quality depends on how many further edges are spent improving it, and every edge may be spent once. Committing an edge to one segment denies it to every other, and the best sub-path in isolation need not belong to the best overall solution. Optimal substructure is lost, and with it any guarantee that a shortest-path computation returns the best connection. \citet{fan2025distribution} recover optimal substructure by fixing the source--target route in advance, which lets a dynamic program build their purification-augmented swapping trees, but we cannot fix the route: which edges to spend on which segment is itself part of the decision. The candidate set need not be large, though: the exhaustive solver's enumeration shows which paths the optimum actually draws on, and the shortest paths, being those of highest fidelity, usually take part in it (\cref{fig:strategy_histogram}). Restricting the candidate set to the shortest paths is a limit \citet{halder2024optimal} reach analytically, capping the usable length of a path at a fidelity threshold $\Fth$. We impose no such threshold, since fidelity is our objective rather than a bar to clear, yet the same decay constrains us. Specializing \cref{eq:path_fidelity} to a homogeneous path whose $L$ intermediate nodes all carry fidelity $\fin$ gives $F=\tfrac14+\tfrac34\big(\tfrac{4\fin-1}{3}\big)^{L+1}$, so every swap costs a constant factor of fidelity and a long path is not worth the loss.

We therefore take FOLD's candidate set to be the $k_{max}$ shortest paths, computed with Yen's algorithm~\citep{yen1971kshortest} as in~\citet{shi2020concurrent,li2021routing,halder2024optimal}. Yen builds each new path as a \emph{deviation} of one already found, reusing a prefix and rerouting only from some node onward, so successive candidates differ from the running solution by local detours -- exactly the segments the algorithm merges. It thus starts from the shortest path and merges further routes \emph{into} it, rather than searching for a set of mutually similar paths.

Many prior works address path selection alone, taking the operations as given and asking only which path, or paths, carry the entanglement \citep{shi2020concurrent, li2021routing, halder2024optimal}, in centralized \citep{li2022fidelity} and distributed \citep{tan2026distributed} settings. \citet{jia2024routing} combine path selection with scheduling in an FPTAS that jointly routes and schedules purification, yet still return a single path per request.

\subsection{Scheduling}
\label{sec:purifiable_region}

With the routes chosen, what remains is a \emph{scheduling} task: in what order to combine the edges of the chosen paths, and in what order to apply each pairwise operation. Schedules grow super-exponentially with the number of edges, so exact optimization is out of reach at realistic size and any scalable algorithm must be a heuristic. A heuristic is only as good as the structure it exploits, so we first ask what an optimal schedule looks like. Prior work asks a narrower question, fixing a path and optimizing only the \emph{order} of operations along it, either by searching candidate orderings \citep{goodenough2021, fan2025distribution} or by committing to one a priori \citep{vanmeter2009, goodenough2021}. Those commitments often pick one of two canonical orderings. Purify-then-swap (PtS) distills the physical edges first and swaps only afterwards, and swap-then-purify (StP) swaps each route down to a single link and distills the results.

\cref{fig:strategy_histogram} enumerates all $2{,}885$ schedules connecting $A$ and $B$ on the graph drawn there, and on that graph the ordering most schedulers assume is already the weaker of the two. StP reaches $F=0.835$ and leads PtS \citep{jia2024routing, liu2025joint, wang2023scheduling} by $0.021$, a margin a scheduler committed to PtS gives up before it makes any other decision. Neither ordering reaches the optimum, which at $F=0.850$ follows neither and \emph{interleaves} swaps and distillations instead.

The conclusion holds network-wide, not only within one schedule. Many algorithms default to PtS, purifying the links before joining them \citep{jia2024routing, liu2025joint, wang2023scheduling}. This is optimal along a single path but not once a network offers more than one route from $A$ to $B$, which is the common case: across more than $3{,}000$ small random networks the two orderings disagree on nearly a third. Topology alone decides which ordering is better -- parallel copies of a link favour PtS, disjoint routes favour StP -- and the gap grows with every disjoint route added. No canonical ordering applied globally reaches the optimum, so the ordering must follow the topology segment by segment as the solution is built, rather than be fixed once for the whole network.

One partial guarantee exists. \citet{jia2024routing} prove that PtS is optimal whenever every edge satisfies $f_e\ge0.7$ and every swap succeeds with probability $p_s\le0.818$. However, the guarantee holds \emph{along a given path} and does not extend network-wide, as \cref{fig:strategy_histogram} shows. Outside that range the optimal ordering is unproven even along a single path. Their objective also inverts ours: they minimize cost subject to a fidelity constraint, a problem they show to be NP-hard, and \citet{liu2025joint} likewise minimize the states consumed to reach a target fidelity. We fix the available resources and maximize the fidelity they yield.

Optimal schedules often leave applicable distillations unused. As introduced in \cref{sec:model}, the distillation function \cref{eq:distillation_fidelity} improves \emph{both} inputs only inside the \emph{purifiable region}, where $\Fpur(f_1,f_2)>f_1$ and $\Fpur(f_1,f_2)>f_2$ (\cref{fig:distillation_selectivity}(a)). Distilling whenever possible is therefore not optimal. The region admits only inputs of similar fidelity, the largest difference that still improves both being $0.076$, attained at $(f_1,f_2)\approx(0.811,0.735)$. The restriction applies already on the simplest topology. The optimum on two nodes joined by five parallel edges of random fidelity keeps the single best edge and distills nothing in $52.1\%$ of $5{,}000$ draws (\cref{fig:distillation_selectivity}(b)), because no other edge lies close enough in fidelity to purify with it. The draws that do distill merge edges of similar fidelity, a two-edge merge spanning only $0.019$ in fidelity at the median, which mirrors the shape of the region itself.

A whole network is equally selective and distills nothing until its edge fidelities pass a threshold. On the 8-node topology of \cref{fig:distillation_selectivity}(c), every edge carrying the same input fidelity $\fin$, the optimum coincides with the best single path while $\fin$ is low and only above $\fin\approx0.69$ starts purifying and recruiting edges beyond the shortest path. Distillation is a resource to spend selectively, not a step to apply wherever the topology allows.

The region is also known as \emph{banded distillation}, introduced by \citet{vanmeter2009} and used by \citet{goodenough2021} to prune their protocol search, which distills two states only when $|F_1-F_2|\le\varepsilon_{distill}$. Our purifiable region follows directly from the distillation function instead of being tuned, so it needs no threshold $\varepsilon_{distill}$ and never rejects a merge that would have helped. \citet{fayyaz2026purify} likewise derive the $0.076$ bound rather than tune it. The asymmetry there is temporal, with one pair waiting in memory while the other is generated. Ours is spatial, between the parallel routes a network offers from $A$ to $B$. For routing it acts as a gate: the algorithm evaluates a candidate distillation only when its two input fidelities fall inside the region, which costs nothing and discards the merges that cannot help.

\begin{figure}[H]
  \centering
  \includegraphics[width=0.95\columnwidth]{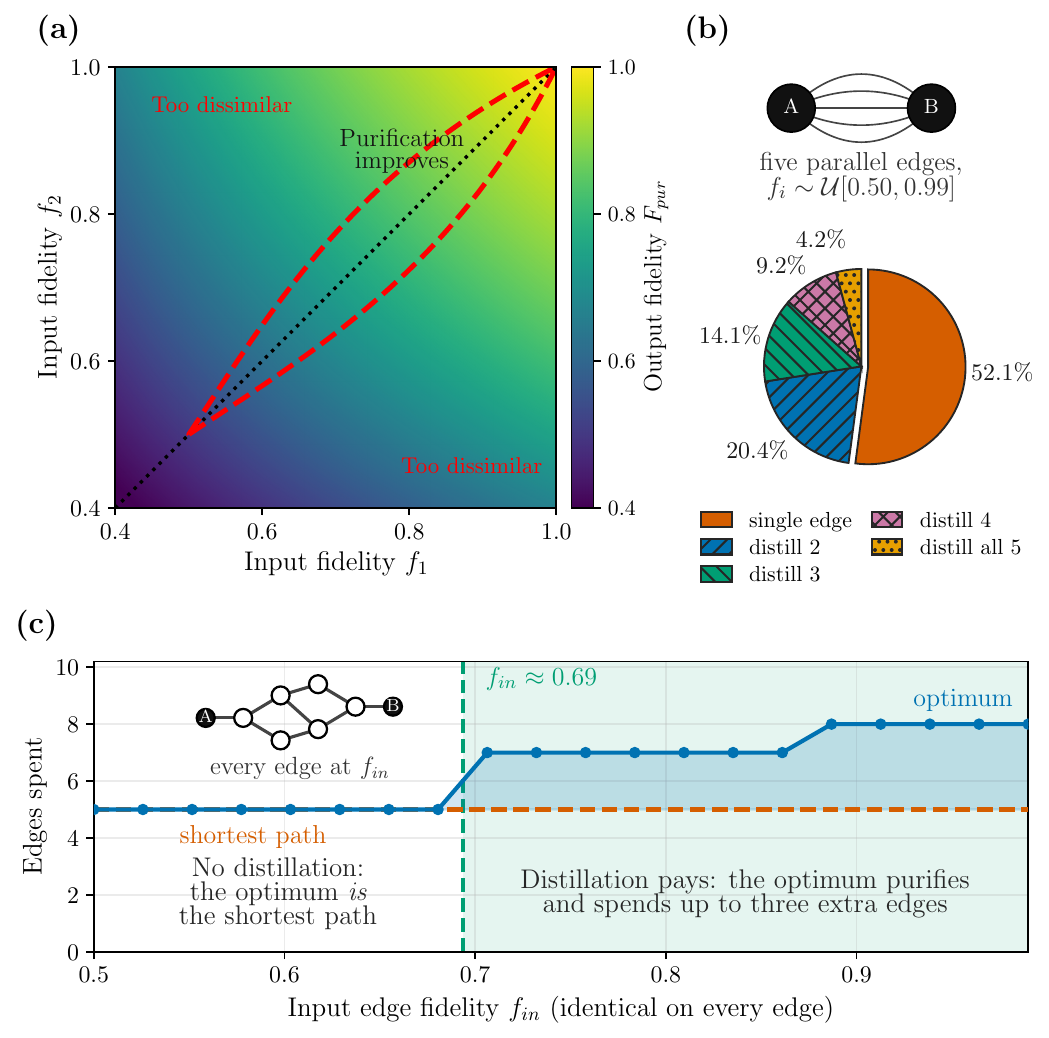}
  \caption{\textbf{Distillation improves only inside the
  \emph{purifiable region}: a single edge is optimal about half the time, and
  the shortest path stays optimal until fidelities cross a threshold.} The same
  restriction, seen at three scales.
  \textbf{(a)} The distillation function: output fidelity $\Fpur(f_1,f_2)$ of
  \cref{eq:distillation_fidelity}, with the red dashed curve bounding the region
  where distillation strictly improves upon \emph{both} inputs and the dotted
  line marking $f_1=f_2$. The region requires both inputs above $f=0.5$ and,
  above that, excludes pairs whose fidelities are too dissimilar.
  \textbf{(b)} A pair of nodes: two nodes joined by five parallel edges of
  independently randomized Werner-state fidelity, over $5{,}000$ draws, with
  the optimal strategy keyed by both color and fill pattern. Keeping the single
  best edge, and distilling nothing, is optimal in $52.1\%$ of cases.
  \textbf{(c)} A network: on an 8-node topology whose edges all carry the same
  input fidelity $\fin$, the $A\!\to\!B$ optimum spends exactly the five edges
  of the shortest path until $\fin\approx0.69$, and only above that threshold
  does it purify and recruit further edges.}
  \label{fig:distillation_selectivity}
\end{figure}

\subsection{The FOLD Algorithm}
\label{sec:heuristic_algorithm}

The experiments above characterize an optimal solution but do not yield one efficiently. The exact solver behind them grows super-exponentially with the size of the network, and more precisely with the number of independent cycles that size brings (\cref{sec:optimal_brute_force_algorithm}). It therefore reaches only small or near-tree topologies, and stalls on the cycle-rich ones, where distillation helps most. These experiments instead expose structure that a tractable algorithm exploits to recover almost all of that optimum. These findings assemble into FOLD, a single polynomial-time heuristic (\cref{alg:path_purification}). It is the first we know of to decide at all three scales at once -- which routes to draw on, in what sequence to combine them, and in what order to merge each pair.

Each of its decisions follows a lesson from the sections above. The best available path consistently belongs to the optimum (\cref{sec:path_selection}), so the algorithm initializes its running solution with the shortest -- highest-fidelity -- path and only ever tries to improve it. The optimal strategy mostly utilizes the high-fidelity paths. Those can be computed using Yen's algorithm, which returns the remaining candidates as local detours, so the algorithm folds them in one at a time. Distillation improves both inputs only inside the purifiable region (\cref{sec:purifiable_region}), so the algorithm merges a detour only when the merge improves both inputs, rather than wherever the topology allows. No canonical swap-distill order wins network-wide (\cref{fig:strategy_histogram}), so the algorithm sets the order merge by merge, as the topology at each divergence dictates. It applies a merge only when the end-to-end fidelity improves.

Concretely, FOLD computes the $k_{max}$ shortest paths and takes the shortest as its running solution $p_1$, then considers the remaining candidates in turn. For a candidate $p_2$, the \emph{diverging sub-paths} $\mathrm{Div}(p_1,p_2)$ are the alternative routes the two take between each pair of consecutive shared nodes: parallel virtual edges between the same anchors, precisely the local detour by which Yen's algorithm produced the candidate. The algorithm skips a merge whose candidate sub-path reuses an already-consumed edge, and one whose input fidelities fall outside the purifiable region. Otherwise it merges the two sub-paths, replaces the old segment of the running solution with the result, and keeps the change only if the end-to-end fidelity improves. It repeats until a candidate yields no further improvement, moves on to the next, and finally collapses the running solution into a single source--target edge.

\begin{algorithm}[htb]
\caption{FOLD: Fidelity-Optimizing Local Detours}
\label{alg:path_purification}
\begin{algorithmic}[1]
\Require Network $G$, source $s$, target $t$, path budget $k_{\max}$
\Ensure Running solution $p_1$, collapsing by swapping to a single $s$--$t$ edge
\Statex
\Function{Fold}{$G,s,t,k_{\max}$}
  \State $P \gets \textsc{Yen}(G,s,t)$ \Comment{$k$ shortest paths, weight $w_e$}
  \State $p_1 \gets P[1]$ \Comment{running solution}
  \For{$i \gets 2$ \textbf{to} $k_{\max}$}
    \Repeat
      \State $\mathit{improved} \gets \textsc{False}$
      \ForAll{$(p_1',p_2') \in \mathrm{Div}(p_1,P[i])$} \Comment{diverging sub-paths}
        \If{$\mathrm{used}(p_2') \cap \mathrm{used}(p_1) \neq \emptyset$}
          \State \textbf{continue} \Comment{edge already consumed}
        \EndIf
        \If{$\big(F(p_1'),F(p_2')\big) \notin$ Purifiable Region}
          \State \textbf{continue}
        \EndIf
        \State $m \gets \Call{Merge}{p_1',p_2'}$ \Comment{distill into one edge}
        \State $p_{\mathrm{new}} \gets p_1$ with sub-path $p_1'$ replaced by $m$
        \If{$F(p_{\mathrm{new}}) > F(p_1)$} \Comment{fidelity improved}
          \State $p_1 \gets p_{\mathrm{new}}$; \quad $\mathit{improved} \gets \textsc{True}$; \quad \textbf{break}
        \EndIf
      \EndFor
    \Until{$\mathit{improved} = \textsc{False}$} \Comment{$P[i]$ exhausted}
  \EndFor
  \State \Return $p_1$
\EndFunction
\end{algorithmic}
\end{algorithm}

Here $F(\cdot)$ is the end-to-end fidelity a route delivers, and $\textsc{Merge}$ swaps each sub-path down to a single edge and distills the two results. Two diverging sub-paths sharing only their anchors are parallel edges, where PtS and StP coincide, nothing being left to swap. With disjoint intermediate nodes there are no parallel copies to distill, so swapping each down first is the only option. Because divergence is split at every shared node, every merge is of one kind or the other, so what the topology fixes is not an ordering but the granularity at which each distillation lands. Merged sub-paths become virtual edges for later merges, so the schedule interleaves swaps and distillations as the optimum of \cref{fig:strategy_histogram} does, unlike the many algorithms optimizing over a fixed path~\cite{vanmeter2009,goodenough2021,koutsopoulos2024,liu2025joint,wang2023scheduling}. These are the same local bypasses Q-PASS holds as recovery detours around a failed segment~\citep{shi2020concurrent}, except that we distill them into the running solution rather than keep them as backups.

Folding candidates one at a time is also the strongest of the three natural distillation orders. Three schedules distill a fixed set of states: \emph{parallel}, \emph{sequential} (entanglement pumping) and \emph{tournament}. Comparing them in the one-shot model, where nothing is replenished, reproduces the ranking \citet{vanmeter2009} establish for repeater chains fed by new pairs (\cref{sec:distillation_order} of the supplemental material). Pumping delivers the highest average fidelity once the probability of success is charged for, and the rule choosing which edges to distill barely matters. An algorithm that maintains a single running solution is structurally a pumping schedule, folding candidates in one after another and keeping a merge only when it improves the end-to-end fidelity, so FOLD inherits the best of the three orders, and the order in which it visits the candidates matters little.

For a fixed number of paths $k_{max}$, FOLD runs in $\bigO\!\big(r\,(n^{2}+m)+n\,m\log n\big)$ time, where $n$ counts the nodes, $m$ the edges, and $r=m-n+c$ is the circuit rank with $c$ the number of connected components (\cref{thm:heur}). The circuit rank is the minimum number of edges whose removal breaks every cycle, so it measures both the density of the graph and the parallel edges within it. It also governs the running time: every successful merge consumes one independent cycle, so at most $r$ merges succeed (\cref{lem:rank}), and the work scales with the cycle structure of the graph. Dense graphs, where $r=\Theta(m)$, give $\bigO(mn^{2})$ (\cref{cor:dense}), and near-tree networks, where $r=\bigO(1)$, fall back to the $k$-shortest-paths computation alone. Either way this is polynomial, against the $m^{\Theta(m)}$ of the exact solver. Increasing the path budget $k_{\max}$ adds little to this cost, since it feeds the $k$-shortest-paths search but not the rank-bounded merging phase (\cref{fig:kmax_runtime}).

For simplicity the algorithm swaps linearly, one operation after another along the final path. To account for decoherence or latency, apply the swaps pairwise in a tree so that they run in parallel, mirroring the schedulable ``tree'' optimization problem \citet{jia2024routing} identify for distillation. The full routine, a step-by-step walkthrough and a runnable implementation are available in the paper's repository~\citep{entanglenet_code}. The companion visualizer replays these merges operation by operation on a network the reader builds~\citep{entanglenet_visualizer}.

Prior algorithms such as Q-PATH~\citep{li2022fidelity}, Q-CAST~\citep{shi2020concurrent}, DFER~\citep{tan2026distributed} and greedy routing~\citep{chakraborty2019distributed} optimize the entanglement-distribution rate in a continuously replenished network, not the fidelity attainable from one fixed set of resources, so we do not benchmark them against an objective they were never designed for. What separates FOLD from them, and from the multipath purification literature that also distills over node-disjoint paths~\citep{mondal2024multipath,bala2025statistical}, is where distillation happens: link-level schemes purify parallel copies of one edge and whole-path schemes combine two end-to-end paths, whereas ours purifies node-disjoint detours at multiple merge points and discovers those detours itself, as part of routing.

\subsection{Benchmark}
\label{sec:benchmark}
We benchmark FOLD against four simpler references on the $261$ Internet topologies of the Topology Zoo~\citep{knight2011zoo}, so the comparison runs on graphs operators have actually built. The benchmark simulates a quantum network laid over an actual Internet topology. We draw the entanglement on each graph probabilistically, every edge fidelity taken uniformly from a narrow band. These maps are far too large for the exhaustive solver, which reaches only small graphs, so where a question needs the exact optimum as a denominator we recover it from a separate set of smaller random samples of complex topologies. The benchmark scripts, the topology corpus and the per-instance results are in the paper's repository~\citep{entanglenet_code}.

Our experiments route on a simpler weight, $\hat{w}_e=-\log_2(2f_e-1)$, which the implementation reads directly from the negativity $N_e=\tfrac{2f_e-1}{2}$ of a Werner edge. It falls with $f_e$ exactly as $w_e$ does, so the two weights rank single edges identically, but only $w_e$ turns \cref{eq:path_fidelity} into an exact sum. The two weights select a different path in only about $2\%$ of the instances, and $\hat{w}_e$ costs less than $10^{-4}$ in mean end-to-end fidelity. FOLD accepts either weight, and $w_e$ is the one to use when the exact guarantee matters.

The references span the natural strategies. Shortest-path routing (SP) purifies nothing. Link-purify routing (LPR) distills every bundle of parallel edges into one link, then swaps along the shortest path of the resulting graph, so it never uses a detour. Two-best-path purification (TwoPath) swaps two disjoint paths down and distills the two results. The recursive reduction distills parallel links and collapses every degree-two relay until neither move applies, then routes whatever remains by shortest path~\citep{meng2023series, meng2021concurrence}. The companion visualizer runs all five strategies side by side, and the exact solver alongside them where the graph is small enough~\citep{entanglenet_visualizer}.

Purification is not always possible. Only a quarter of the source--target pairs we sample on the Topology Zoo admit any strategy that beats plain shortest-path routing, so \cref{fig:heuristic_benchmark}(a),(b) reports that purifiable subset alone. There FOLD leads every reference, the recursive reduction included, and it collects about twice the fidelity gain of the reduction, the strongest of the four. The exhaustive solver does not appear in these panels. A Zoo map carries a median of $34$ links, well past the size at which an exact answer is still reachable (\cref{fig:heuristic_benchmark}(e),(f)). That is the practical case for a heuristic in the first place.

The recursive reduction is the strongest reference. Its simplicity and its recursive use of the available resources make it a natural choice for a one-shot quantum network, and it reaches the optimum on every graph it is able to consume entirely. These are the \emph{series-parallel}, or \emph{reducible}, graphs~\citep{meng2023series}, the ones that merging parallel links and collapsing degree-two relays turn into a single source--target edge. Most of these topologies are not reducible, though, and there the reduction fails, which is where FOLD outperforms it. On our random samples, small enough to solve exhaustively, the reduction recovers most of the optimal fidelity gain on reducible graphs but almost none on non-reducible ones, where FOLD still recovers most of it (\cref{fig:heuristic_benchmark}(d)). Deployed topologies fall mostly on the non-reducible side, since a topology stops being reducible past a few independent cycles, so the reduction completes on only a fifth of the purifiable Zoo pairs and almost none past circuit rank $10$ (\cref{fig:heuristic_benchmark}(c)).

We do not read this as FOLD dominating everywhere. Where a network is reducible, the simpler reduction remains the better choice, and our own benchmark shows it. We read it instead as evidence that deployed topologies mostly fall outside that case, so the regime in which FOLD outperforms the reduction is the one a real network is likely to present.

FOLD also runs in polynomial time, against the super-exponential cost of the exact solver. \cref{fig:heuristic_benchmark}(e),(f) shows the median running time of all six strategies on Erd\H{o}s--R\'enyi graphs, against graph size and against circuit rank. The five polynomial strategies never differ by more than a factor of $3.2$, so FOLD costs no more than the simplest reference it beats. The exact solver is among the cheapest of the six on a tree, and its cost then grows beyond the range of the panel. Each further independent cycle multiplies its cost by about an order of magnitude, from $6$ ms at circuit rank $0$ to $320$ s at rank $5$, past which it no longer finishes within an hour. The rank that puts an exact answer out of reach is the same rank that leaves the reduction with nothing to offer, and a deployed backbone sits far above both.

\begin{figure}[H]
  \centering
  \includegraphics[width=0.95\columnwidth]{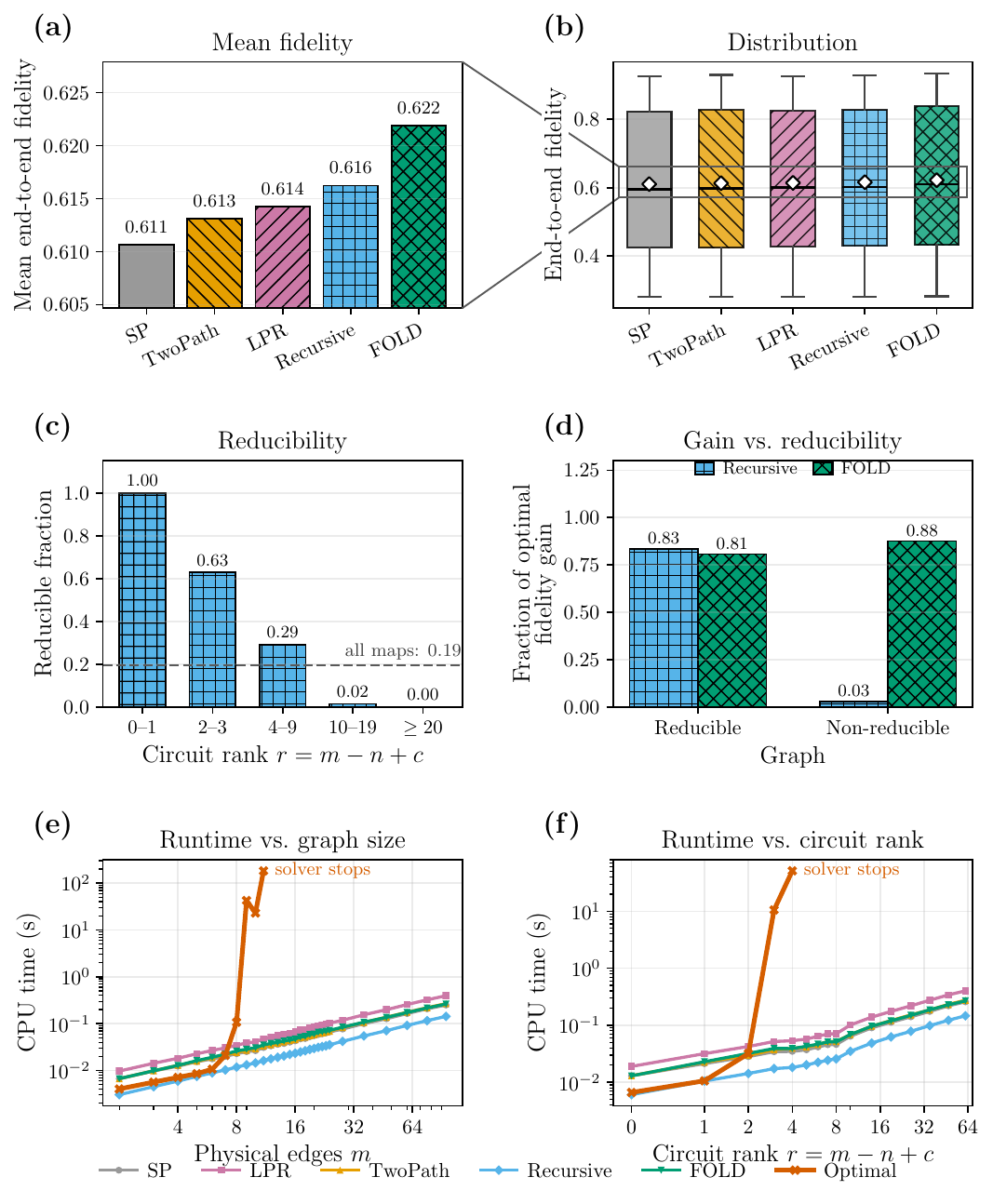}
  \caption{\textbf{FOLD leads every reference on Internet-derived
  topologies while running as fast as the cheapest of them.}
  The purifiable source--target pairs of the Internet topologies of the Topology
  Zoo~\citep{knight2011zoo} feed \textbf{(a)}--\textbf{(c)}, random samples feed
  \textbf{(d)}, and Erd\H{o}s--R\'enyi graphs feed \textbf{(e)},\textbf{(f)}.
  \textbf{(a)},\textbf{(b)} Mean end-to-end fidelity and its distribution
  (box${}={}$IQR, line${}={}$median, $\diamond={}$mean, whiskers${}={}$5--95\%)
  for shortest-path routing (\emph{SP}), link-purify routing
  (\emph{LPR}), two-best-path purification (\emph{TwoPath}), the recursive
  reduction (\emph{Recursive}) and \emph{FOLD}, keyed by color and fill
  pattern and ordered by increasing fidelity. The exhaustive \emph{Optimal} is
  out of reach at this size.
  \textbf{(c)} How often the reduction completes, against circuit rank. The
  dashed line marks the rate over the whole corpus.
  \textbf{(d)} The fraction of the optimal fidelity gain,
  $(F-F_{\mathrm{SP}})/(F_{\mathrm{opt}}-F_{\mathrm{SP}})$, that the reduction and
  FOLD recover on reducible and non-reducible graphs.
  \textbf{(e)},\textbf{(f)} Median CPU time against graph size \textbf{(e)} and
  against circuit rank \textbf{(f)}. Both axes are logarithmic, the rank axis
  symmetrically so, since a tree has rank zero. Each exhaustive curve stops
  where the solver no longer finishes within an hour.}
  \label{fig:heuristic_benchmark}
\end{figure}

\section{Noise and Uncertainty}
\label{sec:uncertainty}

So far we have treated the edge fidelities $\{f_e\}$ as known exactly. In practice they are not. A single copy of a state does not reveal its fidelity, so the network estimates it from many identical copies, and that estimate degrades while the state waits in memory \citep{delledonne2025operating}. A router thus commits to a structure on the mean fidelities, and the realized fidelities determine what it delivers. A solution must remain optimal once the fidelities are realized, not merely report a high fidelity. This matters most at the boundary of the purifiable region of \cref{fig:distillation_selectivity}(a), where a small error in $f_e$ flips the decision of whether a distillation helps at all.

We therefore give each edge a distribution over its fidelity. Tomography is a Bernoulli trial between $\ket{\Psi^-}$ and the totally mixed state (\cref{eq:first_werner_form}), so a Beta distribution is the natural belief about one edge. Swapping multiplies depolarizing parameters, which makes $X_e=-\log p_e$ the coordinate that adds along a path and the Gamma the family closed under that addition. We place a Gamma on each $X_e$ and pin its two parameters so that the mean of an edge is the fidelity the router was handed, $\mathbb{E}[f_e]=\bar f_e$, and every edge is characterized to the same precision, $\mathrm{sd}(f_e)=\sigma_F$. A single parameter $\sigma_F$ sets the uncertainty of the whole network. Raising it adds spread, leaves every mean unchanged and never produces an unphysical state. Edges of the same link stay independent, as each is physically distinct.

\begin{figure}[H]
  \centering
  \includegraphics[width=\columnwidth]{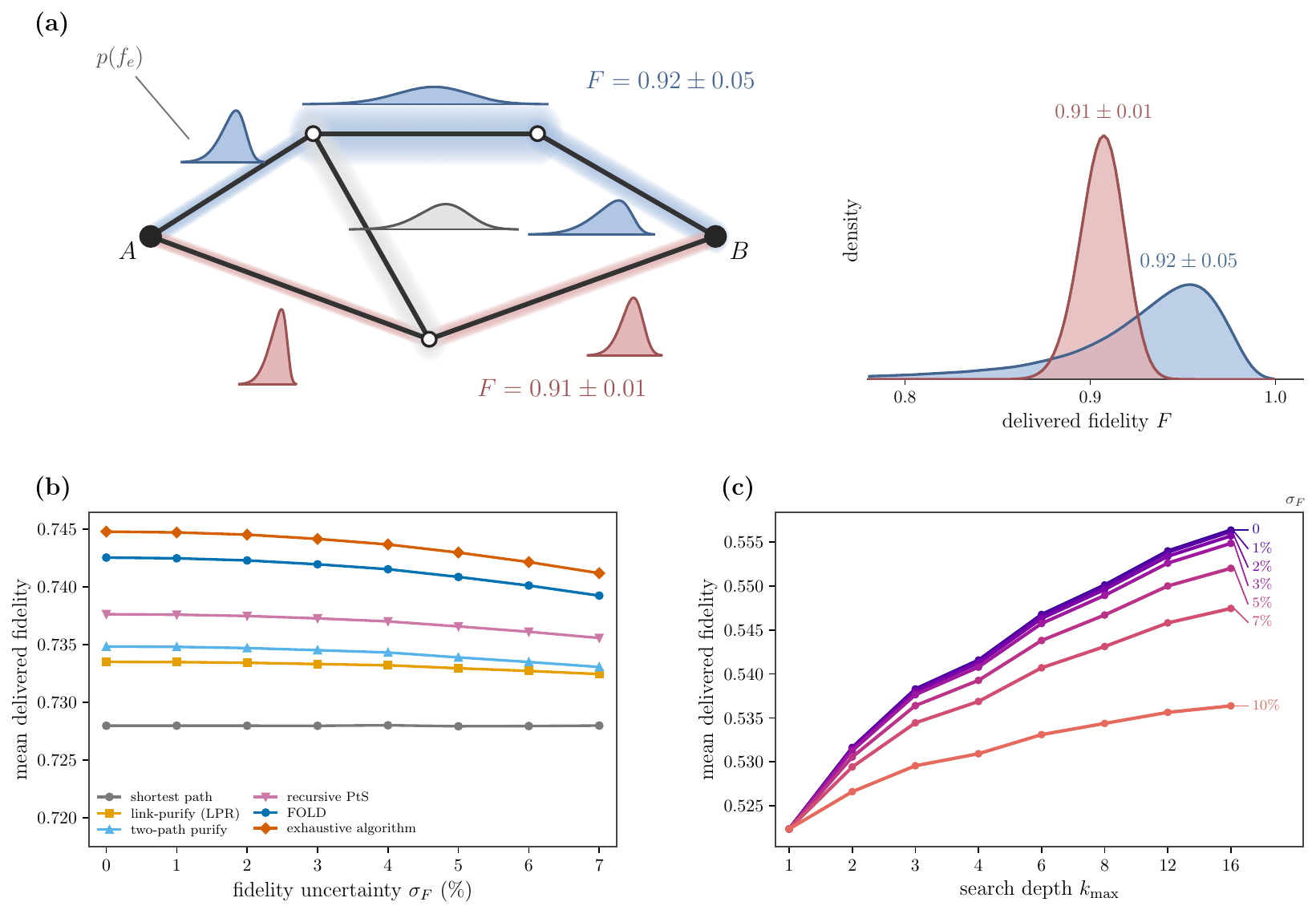}
  \caption{\textbf{Routing on the mean fidelities: the cost of uncertainty and the
  effect of search depth.}
  \textbf{(a)} Each link carries a distribution over its fidelity rather than a
  number, and a route's highlight blurs with the spread of its links. The upper
  route has the higher mean fidelity, $F=0.92\pm0.05$, and the disjoint lower
  route less uncertainty in its end-to-end fidelity, $F=0.91\pm0.01$ (propagation rule in
  \cref{sec:uncertainty_propagation}).
  \textbf{(b)} Mean delivered fidelity against the per-edge uncertainty
  $\sigma_F$ for the six strategies of \cref{fig:heuristic_benchmark} on the
  purifiable graphs, each committing on the mean graph and executing on
  thousands of realized ones. The ordering of the strategies holds at every
  $\sigma_F$, FOLD stays within $0.0023$ of the exhaustive solver, and only
  shortest-path routing is flat.
  \textbf{(c)} Mean delivered fidelity against the search depth $k_{\max}$ on
  $150$-node networks with source and target about $15$ hops apart, one curve
  per uncertainty level. The $\sigma_F=0$ curve is the case of no uncertainty,
  where the fidelity the algorithm computes is the one it delivers. Every other
  curve falls below it. Deeper search raises the
  mean-graph fidelity by $0.034$ between $k_{\max}=1$ and $16$, of which $98\%$ is
  delivered at $\sigma_F=2\%$ and $41\%$ at $10\%$}
  \label{fig:uncertainty_benchmark}
\end{figure}

\subsection{Routing Under Uncertainty}
\label{sec:routing_under_uncertainty}

We analyze how FOLD performs when the fidelities are not known exactly, and how it compares to existing approaches. Each algorithm decides on the mean graph. We then execute that decision on thousands of graphs drawn from the distribution model and record the delivered fidelity. 
\Cref{fig:uncertainty_benchmark}(a) shows the difficulty on a small example: the router commits a route or scheduling, the better one on the mean graph, while a different strategy might be the better one once the fidelities are realized. 
\Cref{fig:uncertainty_benchmark}(b) applies this experiment to all six strategies of \cref{sec:benchmark}. Every strategy that purifies loses fidelity as $\sigma_F$ grows, and FOLD loses at the same rate as the exhaustive solver, keeping its margin over the recursive reduction and staying within $0.0023$ of the optimum on the mean graph at every noise level. Shortest-path routing is the exception and stays flat, since it purifies nothing and its structure is a pure swap chain, along which expectations multiply exactly. The ordering of the strategies is a property of the topology and is unchanged by the spread, and the uncertainty affects only the strategies that distill.

Uncertainty costs more than any parameter the routing layer controls. Solving the routing exactly rather than with FOLD is worth $0.0015$ in delivered fidelity, whereas not knowing the fidelities to better than $10\%$ costs $0.039$ against a router that already knows the realized values, some $26$ times more. At that noise level the structure chosen on the mean graph is still the best one on only two of every five realized graphs, and doubling FOLD's search depth recovers none of the loss. \Cref{fig:uncertainty_benchmark}(c) traces the whole trade-off. Raising $k_{max}$ from $1$ to $16$ lifts the fidelity promised on the mean graph by $0.034$, but the delivered fidelity falls below that value as $\sigma_F$ grows, $98\%$ of that gain arriving at $\sigma_F=2\%$, $87\%$ at $5\%$ and only $41\%$ at $10\%$. Under uncertainty the binding constraint is information, not search depth.
\section{Multiple Requests}
\label{sec:requests}
In a one-shot quantum network, several pairs that wish to communicate
simultaneously must share the same resources, which creates a trade-off between
the fidelities they can each achieve. Using the optimal algorithm described in
\cref{sec:optimal_brute_force_algorithm}, we analyze this multiple-request
scenario by plotting a Pareto graph of the achieved fidelities across all
sequences of operations (\cref{fig:multiple_requests}). We also show how FOLD
(\cref{sec:heuristic_algorithm}) can serve multiple requests by interleaving them.

Some algorithms, such as \citet{li2021routing}, also optimize for multiple
requests, but treat fidelity not as the objective but as a constraint. They
maximize a weighted network flow and report link utilization, delay and Jain's
fairness index across the served requests. \citet{koutsopoulos2024}
maximizes a utility of the end-to-end fidelities of the form
$u(\mathbf{x},\mathbf{F})=\log\sum_i x_i g(F_i)$, following the network utility
maximization framework of \citet{vardoyan2022utility}, in which the objective is
averaged over many delivered pairs. The Pareto frontier we plot instead keeps the
requests separate, so that the trade-off between them is visible rather than
aggregated away.

One network carries this analysis, small enough that enumerating every schedule
remains possible. What follows is a proof of concept for routing several
requests at once, and the effects it shows call for a broader study across
topologies.

\Cref{fig:multiple_requests} shows a graph with $6$ nodes and $9$ links, together
with its Pareto graph, which plots every possible sequence of operations as a
function of the achieved fidelity for request A ($A_1$--$A_2$) and request B
($B_1$--$B_2$) within that graph. Requests A and B share the network
(\cref{fig:multiple_requests}(a)), and the three parallel $A_1$--$B_1$ links are
the contested resource. Request A purifies over as many of them as its budget
allows and swaps at $B_1$, and request B does the same and swaps at $A_1$. Each
falls back on a detour through its own relay once it loses them. Exchanging
$A_1$ with $B_1$, $A_2$ with $B_2$ and $R_1$ with $R_2$ maps every link onto a
link of equal fidelity, so neither request owns what the two compete for. The
Pareto frontier in \cref{fig:multiple_requests}(b) bounds the region the two
requests can jointly reach, and that exchange makes the region symmetric about
$F_A=F_B$.

The optimal frontier assumes a coordinator that plans both requests jointly,
whereas FOLD (\cref{sec:heuristic_algorithm}) serves a single pair. The
natural way to serve two requests is to run FOLD independently for each
one. However, this independent approach fails as each run greedily consumes physical links,
so the two solutions reuse the same links and cannot be delivered together. Here
both runs claim all three contested links and report $0.839$, an outcome that
lies outside the jointly achievable region.
Coordination between requests is therefore a condition for feasibility, not merely
a route to higher fidelity.

We coordinate the two requests through a shared path budget. Both requests draw
from a single budget of $k_{\max}$ candidate paths, and each request takes one
path at a time in alternation, routing on the graph left by the other. The
outcome is always feasible by construction. In this example, interleaving reaches
the optimal Pareto frontier at a small fraction of the exhaustive solver's cost
(\cref{fig:multiple_requests}(b)). The
division of the budget between the requests shifts the operating point along the
frontier, trading one request's fidelity against the other's. Returns diminish
quickly. Once a request saturates after a few paths, further paths add nothing,
so a modest shared budget serves both requests.

How the two requests take turns spending the shared budget also matters, not
just how the budget is divided between them. Interleaving turn by turn balances
the two fidelities and reaches the frontier at $(0.834, 0.820)$,
whereas block allocation, in which one request spends its entire share before
the other begins, drives the outcome to the frontier's extreme
(\cref{fig:multiple_requests}(b)). The request served first then keeps its
fidelity while the second is left with the least the frontier allows. At an even
split, block allocation costs the second request $0.044$ and the first one
$0.007$. Contention for physical links, rather than the number of paths, binds
the outcome, and the budget split acts in this example as a fairness control
over the frontier.

\begin{figure}[H]
  \centering
  \includegraphics[width=\columnwidth]{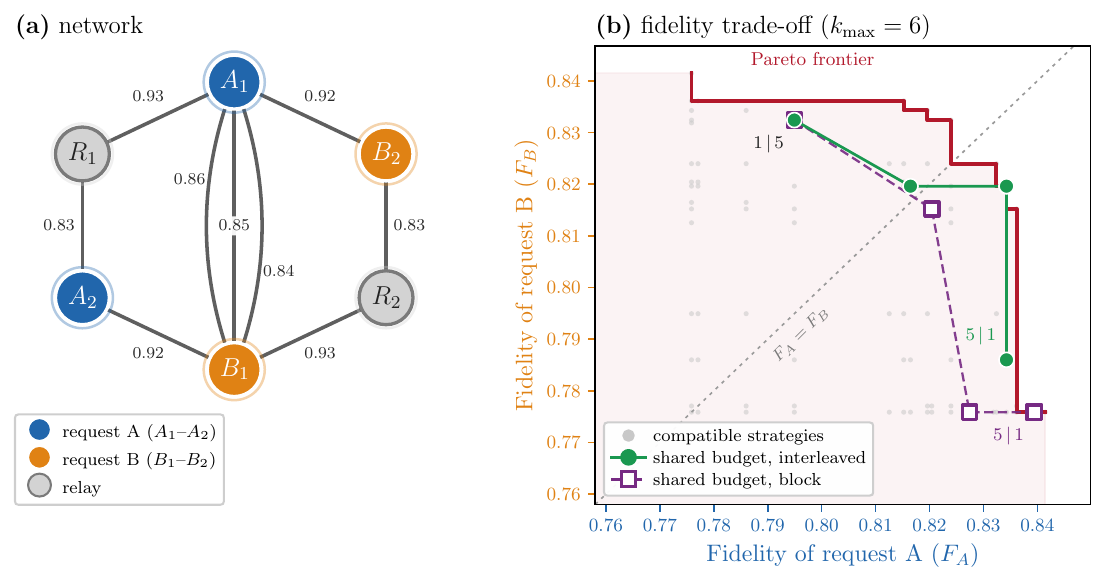}
  \caption{\textbf{Two simultaneous requests, from the optimum to
  FOLD.} Both panels use one network of $6$ nodes and $9$ links.
  \textbf{(a)} The network. Request A ($A_1$--$A_2$, blue) and request B
  ($B_1$--$B_2$, orange) share it, $R_1$ and $R_2$ relay, and the three
  parallel $A_1$--$B_1$ links are the contested resource. Exchanging the two
  requests maps the network onto itself, so neither of them owns those links.
  \textbf{(b)} The
  fidelity trade-off. Grey points are every compatible pair of operation
  sequences, and the red staircase is the optimal Pareto frontier of the
  discrete set of joint outcomes, symmetric about the dashed $F_A=F_B$ line.
  Under a shared budget $k_{\max}=6$, interleaving
  the two requests turn by turn (green) reaches the frontier at a balanced point,
  and the budget split shifts that point along the frontier. Serving one request
  fully before the other (block, purple) drives the outcome to the frontier's
  extreme, maximizing the first request at the expense of the second. Twenty of
  the $97$ compatible pairs cripple one request and fall below the frame.}
  \label{fig:multiple_requests}
\end{figure}
\section{Conclusions}
\label{sec:conclusions}

We recast entanglement routing for the one-shot regime that distributed
quantum computation demands. There a fixed set of pre-distributed pairs must be
combined into the highest end-to-end fidelity before coherence is lost, rather
than a rate of threshold-passing pairs accumulated over time. Optimizing this
objective across the whole topology reveals that the purify-then-swap ordering
much of the literature assumes is not optimal network-wide, and that the exact
solution does not scale, costing $m^{\Theta(m)}$ time. FOLD answers both problems
at once (\cref{sec:routing}). It selects a multi-path and the schedule that
combines it in polynomial time, recovers almost all of the optimal fidelity, and
runs four orders of magnitude faster than the exact solver at circuit rank $5$.
On the topologies of real Internet networks, FOLD leads every reference we test,
and its advantage is largest on the non-reducible graphs that deployed networks
present.
We also apply the same model to uncertain edge fidelities, where information
rather than search depth becomes the binding constraint
(\cref{sec:uncertainty}). We illustrate its extension to several simultaneous
requests as well (\cref{sec:requests}). There a shared path budget keeps the
joint solution feasible, and in our example it controls the trade-off along the
Pareto frontier. Our reported fidelities are conditional on successful
completion of the selected schedule, and they do not account for the
probability of reaching that outcome. Extending the routing objective to
balance conditional fidelity against delivery probability, and designing
adaptive recovery after failed operations, remain important directions for
future work. Together these
results argue for treating routing as a fidelity-optimization problem over the
entire network, and they point toward a routing layer that a quantum internet
can run at scale, wherever many processors must hold entanglement together to
compute as one.

\bibliographystyle{unsrtnat}
\bibliography{refs}

\BeginPart{supp}
\appendix
\crefalias{section}{appendix}
\numberwithin{equation}{section}
\phantomsection
\addcontentsline{toc}{section}{Supplemental Material}

\begin{center}
  {\LARGE\textbf{Supplemental Material}}\\[0.4em]
  {\large for ``One-shot Routing in Quantum Networks''}
\end{center}
\vspace{1.5em}

\section{Entanglement Operations: Swapping and Distillation}
\label{sec:operations}
Two operations recursively combine entangled states in a one-shot
network: entanglement swapping \citep{zukowski1993swapping} and entanglement
distillation \citep{bennett1996purification}
(\cref{fig:operations}(a)). Swapping can establish entanglement between nodes
with no direct edge, whereas distillation consumes parallel states between the
same endpoints to produce a higher-fidelity state.
Both consume existing Werner states and produce a new state whose fidelity follows from the fidelities of the states they consume.
The output of either operation is itself an entangled state, so it can serve as an input to a later swap or distillation. We use lowercase $f_i$ for the fidelity of an elementary edge and uppercase $F_i$ for the fidelity of a state produced by earlier operations. Thus, $\Fpur(F_1,F_2)$ denotes the distillation of two previously constructed states, as in \cref{alg:brute_force}.

\subsection{Swapping}
Swapping joins two states through an intermediate node and destroys both, and the new state inherits the noise of each.
Sequencing it along a path of entangled states entangles the source and target of that path.
Two states of fidelities $f_1,f_2$ swap into
\begin{equation}
    \Fswap(f_1,f_2) = f_1f_2 + \frac{(1-f_1)(1-f_2)}{3},
\end{equation}
which \cref{eq:swapping_fidelity} of the main text also states.
Writing each fidelity through its depolarizing parameter $p=\tfrac{4f-1}{3}$ turns the relation into a plain product,
\begin{equation}
    p_{\mathrm{swap}} = p_1p_2,
\end{equation}
since substituting $f=\tfrac{1+3p}{4}$ on both sides gives $\Fswap=\tfrac{1+3p_1p_2}{4}$.
A path $P$ therefore accumulates $\prod_{e\in P}p_e$ across its edges, and inverting the substitution recovers \cref{eq:path_fidelity} of the main text.

\subsection{Distillation}
Distillation is likewise defined for two states, and applying it in sequence raises one state out of several.
Two states of fidelities $f_1,f_2$ purify into
\begin{equation}
    \Fpur(f_1,f_2) = \frac{f_1f_2+\frac{1}{9}(1-f_1)(1-f_2)}{f_1f_2+\frac{1}{3}(f_1+f_2-2f_1f_2)+\frac{5}{9}(1-f_1)(1-f_2)},
\end{equation}
the BBPSSW relation \citep{bennett1996purification} that \cref{eq:distillation_fidelity} of the main text states.
Unlike swapping, it succeeds only probabilistically.
No substitution linearizes it the way the depolarizing parameter linearizes swapping, which is what costs \cref{sec:routing} its optimal substructure.

\section{Uncertainty Propagation Along a Path}
\label{sec:uncertainty_propagation}

Swapping collapses a path into a single edge through \cref{eq:path_fidelity}, so the uncertainty of the edges reaches the end-to-end fidelity through that same product. Let $p_e=\tfrac{4f_e-1}{3}$ be the depolarizing parameter of edge $e$, now a random variable of mean $\mathbb{E}[p_e]$ and standard deviation $\mathrm{sd}(p_e)$. A path uses each link at most once, so its edges are independent and both moments factor over it,
\begin{align}
  \mathbb{E}\big[F(P)\big] &= \frac{1}{4}\Big(1+3\prod_{e\in P}\mathbb{E}[p_e]\Big),
  \label{eq:path_mean}\\
  \mathrm{Var}\big(F(P)\big) &= \Big(\mathbb{E}\big[F(P)\big]-\frac{1}{4}\Big)^{2}
    \Big(\prod_{e\in P}\big(1+c_e^{2}\big)-1\Big),
  \label{eq:path_variance}
\end{align}
where $c_e=\mathrm{sd}(p_e)/\mathbb{E}[p_e]=\mathrm{sd}(f_e)/\big(\mathbb{E}[f_e]-\tfrac14\big)$ is the coefficient of variation, the uncertainty of an edge relative to its distance above the fully mixed floor $F=\tfrac14$. \Cref{eq:path_mean} is \cref{eq:path_fidelity} evaluated at the mean fidelities, so the mean of each edge propagates exactly and needs no correction term.

The variance carries the whole effect. Dividing \cref{eq:path_variance} by that same square defines the relative uncertainty $c(P)=\mathrm{sd}\big(F(P)\big)\big/\big(\mathbb{E}[F(P)]-\tfrac14\big)$ of the delivered state,
\begin{equation}
  c(P)^{2}=\prod_{e\in P}\big(1+c_e^{2}\big)-1\;\approx\;\sum_{e\in P}c_e^{2},
  \label{eq:relative_uncertainty}
\end{equation}
the approximation holding whenever every $c_e\ll1$. That limit is what the first-order propagation rule $\sigma_y^{2}=\sum_e(\partial y/\partial x_e)^{2}\sigma_{x_e}^{2}$ returns for \cref{eq:path_fidelity}, whose derivative $\partial F(P)/\partial f_e=\prod_{e'\in P\setminus\{e\}}p_{e'}$ is the path product with edge $e$ removed. Relative uncertainties add in quadrature along a path, and \cref{eq:relative_uncertainty} states exactly how much the true product exceeds that sum.

The Gamma of \cref{sec:routing_under_uncertainty} makes the growth concrete. Take a path of identical edges, $X_e=-\log p_e\sim\Gamma(k,\theta)$. Its moment-generating function evaluated at $-1$ and $-2$ gives $\mathbb{E}[p_e]=(1+\theta)^{-k}$ and $\mathbb{E}[p_e^{2}]=(1+2\theta)^{-k}$, so a path of $\ell$ edges carries
\begin{equation}
  c(P)^{2}=\left(\frac{(1+\theta)^{2}}{1+2\theta}\right)^{k\ell}-1 .
  \label{eq:gamma_relative_uncertainty}
\end{equation}
The base exceeds $1$ for every $\theta>0$, so relative uncertainty compounds geometrically with the hop count, at the same rate at which the mean fidelity decays. A long path is doubly penalized, even though its absolute spread eventually shrinks together with the fidelity itself.

Propagation also settles the form of the risk term. In the log coordinate the additive risk measure is the variance, $\mathrm{Var}(X_{\text{path}})=\sum_{e\in P}\mathrm{Var}(X_e)$, whereas the edge weight of \cref{sec:routing_under_uncertainty} sums standard deviations,
\begin{equation}
  \sum_{e\in P}\mathrm{sd}(X_e)\;\ge\;\sqrt{\textstyle\sum_{e\in P}\mathrm{Var}(X_e)}\;=\;\mathrm{sd}(X_{\text{path}}),
  \label{eq:risk_superadditive}
\end{equation}
with equality only when a single edge carries all the variance, and with a ratio of $\sqrt{\ell}$ when the edges share one distribution. The additive weight therefore charges a path up to $\sqrt{\ell}$ times the risk it actually carries, a tax on hops beyond the mean weight. That tax is what secures the tail gain of the risk term, and explains why a large $\lambda$ overshoots. Penalizing $\mathrm{Var}(X_e)$ instead reproduces a mean-variance objective exactly, but its units differ from the mean's, so we keep the standard deviation and let $\lambda$ absorb the mismatch.

\section{Optimal Brute-Force Algorithm: Construction and Complexity}
\label{supp-mat:brute-force}
This section describes in more detail how the algorithm works, followed by pseudo-code of the implementation. The full implementation is available in the EntangleNet repository on GitHub~\citep{entanglenet_code}.
The section continues with a complexity proof of the algorithm.
\subsection{Pseudo-Code and Walkthrough}

The algorithm manipulates a single growing list $L$ of \emph{virtual edges}. Each virtual edge $v$ is a record of everything needed to reason about a candidate sequence of operations: its endpoints $\mathrm{ends}(v)$, the set $\mathrm{used}(v)\subseteq E$ of \emph{original} graph edges it consumes, an action encoding $\mathrm{enc}(v)$, and the resulting fidelity $F(v)$. The list is seeded with the original edges of the graph, each occupying a single edge index and encoded by that index. New virtual edges are then built by combining two earlier ones, and the encoding is composed recursively so that every entry records the full operation tree that produced it: a distillation of $v_1$ and $v_2$ is written $(\textproc{purify},\ \mathrm{enc}(v_1),\ \mathrm{enc}(v_2))$ and a swap $(\textproc{swap},\ \mathrm{enc}(v_1),\ \mathrm{enc}(v_2))$, which unfolds into nested expressions such as $(\textproc{swap},\ (\textproc{purify},\ i,\ j),\ k)$ for a two-hop path whose first hop was distilled from edges $i$ and $j$. In the implementation this tree is realized as a compact recursive tag keyed by the consumed edge indices. Equivalently, every virtual edge is a rooted binary tree whose leaves are distinct original edges and whose internal nodes are purify/swap primitives, each constrained by the compatibility rule of its two children.

The construction proceeds through two nested loops: the outer loop walks the list, and for each virtual edge $v_i$ the inner loop revisits every earlier virtual edge $v_j$ and tries to combine the pair. A pair is combinable only if $\mathrm{used}(v_i)\cap\mathrm{used}(v_j)=\emptyset$, since a physical entangled link may serve at most one operation. Tracking $\mathrm{used}(\cdot)$ as an index set (a bitmask in the implementation) makes this check a single disjointness test. Given a disjoint pair, the topology of their endpoints fixes the operation: if the two virtual edges connect the same pair of nodes they are parallel and are merged by distillation, whereas if they share exactly one node they are concatenated by entanglement swapping. The combined fidelity follows the corresponding relation, and its $\mathrm{used}$ set is the union of the two parents'. Some works instead optimize over the available protocols for swapping and distillation \citep{goodenough2021,rozpkedek2018optimizing}. Because each newly formed virtual edge is appended to the end of $L$, it will itself be revisited as a candidate for further combination, so the loop eventually enumerates every valid sequence of operations reachable on the graph. Throughout, a table $D$ keeps, for every pair of nodes, the highest-fidelity virtual edge seen so far, so the final answer is read off directly once the list is exhausted. Virtual edges that already consume all of $E$ cannot be extended and are set aside in $S$ rather than re-appended to $L$.

\begin{algorithm}[H]
\caption{Optimal Brute-Force Encoding Algorithm}
\label{alg:brute_force}
\begin{algorithmic}[1]
\Require Quantum network $G=(V,E)$ with edge fidelities $\{f_e\}_{e\in E}$
\Ensure Table $D$ of the optimal virtual edge for every pair of nodes
\Statex
\Function{Encode}{$G$}
  \State $L \gets \langle\,\rangle$ \Comment{ordered list of virtual edges}
  \State $S \gets \emptyset$ \Comment{virtual edges consuming all of $E$}
  \State $D[u][w] \gets \textsc{Null}$ for all $u,w \in V$
  \ForAll{$e=(u,w) \in E$} \Comment{initialize with the original edges}
    \State $v \gets$ virtual edge with $\mathrm{ends}(v)=\{u,w\}$, $\mathrm{used}(v)=\{e\}$, $F(v)=f_e$
    \State append $v$ to $L$
    \State \Call{Update}{$D,v$}
  \EndFor
  \Statex
  \For{$i \gets 1$ \textbf{to} $|L|$} \Comment{$L$ grows while iterated}
    \State $v_i \gets L[i]$
    \For{$j \gets 1$ \textbf{to} $i-1$}
      \State $v_j \gets L[j]$
      \If{$\mathrm{used}(v_i) \cap \mathrm{used}(v_j) \neq \emptyset$}
        \State \textbf{continue} \Comment{each edge may serve a single operation}
      \EndIf
      \If{$\mathrm{ends}(v_i) = \mathrm{ends}(v_j)$} \Comment{parallel: distillation}
        \State $v \gets$ \Call{Combine}{$v_i,v_j,\textproc{purify}$}
      \ElsIf{$|\mathrm{ends}(v_i) \cap \mathrm{ends}(v_j)| = 1$} \Comment{shared node: swapping}
        \State $v \gets$ \Call{Combine}{$v_i,v_j,\textproc{swap}$}
      \Else
        \State \textbf{continue} \Comment{not combinable}
      \EndIf
      \If{$\mathrm{used}(v) \neq E$}
        \State append $v$ to $L$ \Comment{may still be extended further}
      \Else
        \State $S \gets S \cup \{v\}$ \Comment{no edges left to combine with}
      \EndIf
      \State \Call{Update}{$D,v$}
    \EndFor
  \EndFor
  \State \Return $D$
\EndFunction
\algstore{bruteforce}
\end{algorithmic}
\end{algorithm}

\begin{algorithm}[H]
\addtocounter{algorithm}{-1}
\caption{Optimal Brute-Force Encoding Algorithm (continued)}
\begin{algorithmic}[1]
\algrestore{bruteforce}
\Function{Combine}{$v_1,v_2,\mathit{action}$}
  \State $\mathrm{used}(v) \gets \mathrm{used}(v_1) \cup \mathrm{used}(v_2)$
  \State $\mathrm{enc}(v) \gets (\mathit{action},\ \mathrm{enc}(v_1),\ \mathrm{enc}(v_2))$
  \If{$\mathit{action} = \textproc{purify}$}
    \State $\mathrm{ends}(v) \gets \mathrm{ends}(v_1)$
    \State $F(v) \gets \Fpur\big(F(v_1),F(v_2)\big)$ \Comment{\cref{eq:distillation_fidelity}}
  \Else
    \State $m \gets$ the single node in $\mathrm{ends}(v_1) \cap \mathrm{ends}(v_2)$
    \State $\mathrm{ends}(v) \gets \big(\mathrm{ends}(v_1) \cup \mathrm{ends}(v_2)\big) \setminus \{m\}$
    \State $F(v) \gets \Fswap\big(F(v_1),F(v_2)\big)$ \Comment{\cref{eq:swapping_fidelity}}
  \EndIf
  \State \Return $v$
\EndFunction
\Statex
\Function{Update}{$D,v$}
  \State $\{u,w\} \gets \mathrm{ends}(v)$
  \If{$D[u][w] = \textsc{Null}$ \textbf{or} $F(D[u][w]) < F(v)$}
    \State $D[u][w] \gets v$; \quad $D[w][u] \gets v$
  \EndIf
  \EndFunction
\end{algorithmic}
\end{algorithm}

\subsection{Complexity}
In this part we would prove a worst-case super-exponential complexity of the algorithm. However, the per-instance complexity ranges from polynomial (star) through exponential (long path) to super-exponential ($G_m$).
We first start by proving a general numbering of the trees \cref{lem:doublefact}. In \cref{thm:brute} we then show this can be both the general upper bound of the algorithm, as well as the lower bound for a certain type of graphs $G_m$. Therefore giving a tight worst-case running time complexity of the exhaustive algorithm. The two remaining points of that range are settled in \cref{prop:star_path}.

\paragraph{Search Space}

The brute-force algorithm performs a \emph{closure
computation}. It initializes a working list with the $m$ physical edges and then
repeatedly appends the combination of every \emph{earlier} pair of list elements
that is (i)~edge-disjoint and (ii)~\emph{compatible}, i.e.\ shares both endpoints
(eligible for \texttt{purify}) or exactly one endpoint (eligible for
\texttt{swap}). Every object it constructs is therefore a rooted binary tree whose
leaves are distinct physical edges and whose internal nodes are primitives, subject
to the compatibility constraint at each internal node. The two structures should not be conflated: a \emph{leaf} of an operation tree is a
physical \emph{edge} of the network $G$, while an internal node of the tree is an
operation rather than a network node. A tree with $s$ leaves thus consumes $s$
physical links of $G$.

\begin{definition}[Operation trees]
\label{def:optree}
Let $\Tree(G)$ be the set of all full binary trees whose leaves are pairwise
distinct physical edges of $G$ and in which every internal node combines two
edge-disjoint, compatible subtrees by a single primitive. Write $L=\lvert\Tree(G)\rvert$.
A tree in $\Tree(G)$ encodes one routing together with one order of operations.
\end{definition}

\begin{lemma}[Exhaustiveness]
\label{lem:exhaust}
The algorithm constructs every element of $\Tree(G)$ exactly once, and hence
materializes precisely $L$ virtual edges.
\end{lemma}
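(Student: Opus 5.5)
The proof splits into three parts: \emph{soundness} (everything the algorithm materializes lies in $\Tree(G)$), \emph{completeness} (every element of $\Tree(G)$ is materialized), and \emph{uniqueness} (none is materialized twice). Throughout, I would identify a virtual edge with its encoding $\mathrm{enc}(v)$, read as a rooted binary tree. Two trees are equal when their roots carry the same primitive and their two children are equal as an \emph{unordered} pair. This convention matters because the paper counts trees, not ordered encodings. Since a swap or purify of $v_1,v_2$ is symmetric in its inputs, the unordered reading is the natural one. I would also note two facts at the start. First, the primitive at an internal node is forced by the endpoints of its children: identical endpoint pairs give \textproc{purify}, exactly one shared node gives \textproc{swap}. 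Second, $\Tree(G)$ is finite, since leaves are distinct edges, so $|L|\le|\Tree(G)|$ and the loop terminates.

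\emph{Soundness} is an induction on the list position. Seeds are single-edge trees. A combined entry arises from two earlier entries with $\mathrm{used}(v_i)\cap\mathrm{used}(v_j)=\emptyset$ and compatible endpoints, which is exactly the closure rule of \cref{def:optree}. Hence its leaves are pairwise distinct edges and every internal node is a valid primitive.

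\emph{Completeness} is an induction on the number of leaves $s$. For $s=1$ the tree is a seed. For $s\ge2$, let $T$ have children $T_1,T_2$. These are in $\Tree(G)$ with fewer leaves, so by induction they occur in $L$, at positions $a\neq b$ (they differ, having disjoint nonempty leaf sets); say $b<a$. A further point is needed here: both children were actually appended to $L$ and not diverted to $S$. This holds because $\mathrm{used}(T_1)$ and $\mathrm{used}(T_2)$ are disjoint and nonempty, so neither equals $E$. Since $L$ is iterated to its final length while growing, the outer loop eventually reaches $i=a$, and the inner loop then visits $j=b$. The disjointness test passes, and compatibility holds by the definition of $T$, so the algorithm builds $T$. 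The new tree is placed in $L$ or in $S$, but it is materialized either way.

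\emph{Uniqueness} is a second induction on $s$, and this is where I expect the main care to be needed. Suppose $T$ is produced at two iterations $(i,j)$ and $(i',j')$, with $j<i$ and $j'<i'$. Its children form a fixed unordered pair $\{T_1,T_2\}$. By induction each child appears at exactly one position in $L$, so the unordered position sets satisfy $\{i,j\}=\{i',j'\}$. Together with $j<i$ and $j'<i'$, this forces $(i,j)=(i',j')$. Each ordered pair of indices is visited once and yields at most one operation (the primitive being forced), so $T$ is materialized once. Seeds are unique because the edges of $E$ are distinct. The delicate points are to check that the convention $j<i$ does not lose trees whose ``first'' child comes later in $L$, which the unordered-pair argument settles, and that the diversion of full trees to $S$ never removes a needed child. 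With both settled, the number of materialized virtual edges is exactly $L=|\Tree(G)|$.
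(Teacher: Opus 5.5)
Your proof is correct and follows the same route as the paper: induction on the number of leaves, with the two root subtrees already present in the list by the induction hypothesis, and uniqueness coming from the loop visiting each index pair only once, with $j<i$. You are more careful than the paper on three points it leaves implicit: soundness (needed to get ``precisely $L$''), the observation that a proper subtree can never consume all of $E$ and so is never diverted to $S$, and the fact that the uniqueness step is itself inductive, because it needs each child to occupy a single list position (your opening claim $|L|\le|\Tree(G)|$ likewise rests on soundness and uniqueness, not on finiteness alone).
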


\begin{proof}
We argue by induction on the leaf-count $s$ of a tree $t\in\Tree(G)$. If $s=1$
then $t$ is a physical edge, inserted during initialization. If $s\ge 2$, the
root of $t$ partitions its leaves into two edge-disjoint subtrees $t_{L},t_{R}$,
each with fewer leaves and each lying in $\Tree(G)$, so by the induction hypothesis
both appear in the working list. Because the inner loop pairs each newly inserted
element with \emph{every} earlier element, and $(t_{L},t_{R})$ are edge-disjoint
and compatible, their combination, namely $t$, is generated. It is generated
exactly once: the loop visits only ordered index pairs $(i,j)$ with $j<i$, and the
primitive is symmetric in its operands, so each unordered pair is combined a single
time.
\end{proof}

\paragraph{Lower and Upper Bounds}

The lower bound rests on a classical enumeration, which we record for completeness.
The count below is \emph{unconstrained}: it enumerates all full binary trees on a
leaf set, imposing no compatibility requirement at the internal nodes. Its role in
the two directions of Theorem~\ref{thm:brute} is correspondingly different, and we
make this explicit in the proof.

\begin{lemma}[Number of operation trees on a fixed leaf set]
\label{lem:doublefact}
Fix a set of $s\ge 1$ labeled leaves. The number of full binary trees on these
leaves in which each internal node combines its two children by a single
\emph{commutative} binary operation, with no further restriction, is the double
factorial%
\footnote{%
  Here $n!!$ denotes the skip-product $n(n-2)(n-4)\cdots$, terminating at $2$ or
  $1$. It is a single operator, not an iterated factorial: $n!!\neq(n!)!$. For
  instance $5!!=5\cdot3\cdot1=15$.%
}
\begin{equation}
  T(s)\;=\;(2s-3)!!\;=\;1\cdot 3\cdot 5\cdots(2s-3),
  \qquad T(1)=T(2)=1 .
\end{equation}
\end{lemma}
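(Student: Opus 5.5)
The plan is the classical leaf-insertion bijection, which gives the recursion $T(s+1)=(2s-1)\,T(s)$. Together with $T(1)=1$, this yields $T(s)=\prod_{i=1}^{s-1}(2i-1)=(2s-3)!!$. Because the operation is commutative, a tree is an unordered (non-plane) full binary tree. The children of each internal node form an unordered pair, so we count unrooted-shape objects with labeled leaves and unlabeled internal nodes. The base cases are immediate. For a single leaf there is one tree. For two leaves there is one internal node with an unordered pair of children, so one tree again, consistent with $(-1)!!=1!!=1$.

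For the inductive step, fix leaf labels $1,\dots,s+1$. Given a tree $t$ on $s+1$ leaves, delete leaf $s+1$ together with its parent $p$, and reattach the sibling subtree of $s+1$ in place of $p$. If $p$ was the root, the sibling simply becomes the new root. The result is a full binary tree $t'$ on leaves $1,\dots,s$. Conversely, from a tree $t'$ on $s$ leaves we choose an insertion site. This is either one of the $2s-2$ edges of $t'$ (full binary with $s$ leaves has $2s-1$ vertices, hence $2s-2$ edges), or a new root placed above the old root. On the chosen edge $(u,\text{child } c)$ we subdivide with a new internal node whose children are $c$ and the new leaf $s+1$. At the root site we create a new root with children the old root and $s+1$. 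That gives $2s-1$ sites in total.

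The main point to verify is that these two maps are mutually inverse. This is where commutativity matters: each new internal node has children $\{c,\,s+1\}$ as an unordered pair, so inserting does not produce two distinct trees for left and right placements. The insertion site is recovered from $t$ as the edge entering the parent of $s+1$, or the root site if that parent is the root. Different sites give different trees because the sibling of $s+1$ (a subtree, identified by its leaf set) determines the site. Hence the map $t\mapsto(t',\text{site})$ is a bijection, and $T(s+1)=(2s-1)T(s)$.

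I expect the only delicate step to be the identification of trees up to commutativity. I would handle it by representing each tree canonically as a set system: the collection of leaf-sets of its subtrees, a laminar family containing all singletons and the full set, in which every non-singleton block splits into exactly two blocks. Then equality of trees is equality of these families, and the injectivity of insertion is clear, since the new family adds exactly the block $B\cup\{s+1\}$ (where $B$ is the sibling's leaf set) and adjoins $s+1$ to every ancestor block of $B$. Unrolling the recursion gives $T(s)=1\cdot3\cdots(2s-3)$, as claimed.
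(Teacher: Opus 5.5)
Your proposal is correct and follows essentially the same route as the paper: both use the leaf-insertion bijection with deletion-and-smoothing as its inverse, and your $2s-2$ edges plus one root site give the same $2s-1$ attachment sites the paper counts as the nodes (equivalently subtrees) the new leaf can pair with. Your laminar-family encoding makes the identification up to commutativity more explicit than the paper's wording, but the argument is the same.
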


\begin{proof}
Build the tree by inserting the leaves one at a time. Since the tree must remain
full, a new leaf cannot be hung directly on an existing node. It must instead be paired with
an existing subtree beneath a freshly created parent. Attachment sites are therefore
in bijection with subtrees, hence with \emph{nodes}: pairing with the subtree rooted
at a non-root node $v$ subdivides the edge above $v$, and pairing with the entire
tree caps a new root. A full binary tree on $j-1$ leaves has $j-2$
internal nodes, so $2j-3$ nodes in total and thus $2j-3$ sites. Every insertion adds exactly
two nodes -- the new leaf together with its new parent -- so the number of sites grows
by two at each step and, starting from $1$, runs through the odd integers
$1,3,5,\dots,2s-3$. The construction is a bijection:
deleting leaf $j$ from any $j$-leaf tree and smoothing its now degree-two parent
recovers a unique $(j-1)$-leaf tree together with the site the leaf occupied, so no
tree is counted twice or omitted. Multiplying the site counts over $j=2,\dots,s$
(leaf $2$ joins leaf $1$ in a single way) yields
\begin{equation}
  T(s)\;=\;\prod_{j=2}^{s}(2j-3)\;=\;1\cdot 3\cdot 5\cdots(2s-3)\;=\;(2s-3)!!,
\end{equation}
a product of consecutive odd integers, i.e.\ the double factorial. For instance,
$T(1),\dots,T(5)=1,1,3,15,105$.
\end{proof}

Note that \cref{lem:doublefact} counts all possible trees of such size and connections. It does not account for the constraints of physically applying swapping and distillation, as it counts all existing internal nodes as valid connections for the newly added leaf.
We now show in \cref{thm:brute} that such a count of all internal nodes as valid connections is possible for a certain type of graph $G_m$, and therefore the count is a lower bound on the number of operation trees in $\Tree(G_m)$, which then gives us a lower bound for the worst-case complexity of the algorithm. We then show that the same count is also an upper bound on the number of operation trees in $\Tree(G)$ for any graph $G$, and therefore also an upper bound on the worst-case complexity of the algorithm.

\begin{theorem}[Complexity of the exhaustive scheme]
\label{thm:brute}
The encoding algorithm materializes $L$ virtual edges and runs in $\Theta(L^{2})$
time. In the worst case
\begin{equation}
  L \;=\; 2^{\Theta(m\log m)} \;=\; m^{\Theta(m)},
\end{equation}
so the algorithm is super-exponential in the number of physical edges.
\end{theorem}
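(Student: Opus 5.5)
The plan is to split the statement into three parts: the $\Theta(L^2)$ running time, the upper bound $L=m^{O(m)}$ on every graph, and the lower bound $L=m^{\Omega(m)}$ on a specific family $G_m$.

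\emph{Running time.} By \cref{lem:exhaust}, the list together with $S$ holds exactly the $L$ elements of $\Tree(G)$, each generated once. The outer loop visits each appended element $v_i$, and the inner loop scans all $j<i$. So the number of pair examinations is $\sum_i (i-1)=\Theta(|L_{\mathrm{app}}|^2)$, where $L_{\mathrm{app}}$ is the set of appended elements. Each examination costs a disjointness test on $\mathrm{used}(\cdot)$, an endpoint comparison, and possibly one call to \textsc{Combine} and \textsc{Update}. I would state a word-RAM convention under which this is $O(1)$, with the bitmask fitting in $O(1)$ words. Even without that convention the cost is at most $O(m)$ per test, and a factor $m$ disappears inside $m^{\Theta(m)}$.

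The subtle point is that elements placed in $S$ are materialized but never appended. I would therefore argue $|L_{\mathrm{app}}|=\Theta(L)$ in the worst case, or at least that $|L_{\mathrm{app}}|$ is still $m^{\Theta(m)}$. On the extremal family below, the appended trees already include every tree on $m-1$ leaves, which is enough. I expect this bookkeeping (appended versus materialized) to be the main place where care is needed. If $\Theta(L^2)$ cannot be shown for every graph, I would state the bound as $O(L^2)$ in general and $\Theta(L^2)$ on $G_m$.

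\emph{Upper bound.} Every $t\in\Tree(G)$ is a full binary tree on some leaf set $A\subseteq E$ with $|A|=s$. Forgetting the compatibility constraints only enlarges the count, so \cref{lem:doublefact} bounds the trees on $A$ by $(2s-3)!!$. This gives
\[
L\le\sum_{s=1}^{m}\binom{m}{s}(2s-3)!!\le 2^m(2m)^m=2^{O(m\log m)}.
\]

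\emph{Lower bound.} Take $G_m$ to be two nodes $u,w$ joined by $m$ parallel edges. Every tree built from these edges has endpoints $\{u,w\}$. Any two edge-disjoint subtrees therefore share both endpoints and are always combinable by \textsc{purify}, so every internal node of every full binary tree is valid. By \cref{lem:doublefact},
\[
L\ge(2m-3)!!\ge(m-1)!=2^{\Omega(m\log m)}.
\]
The trees on $m-1$ leaves alone number at least $(2m-5)!!$, and these are appended, so $|L_{\mathrm{app}}|$ is also $m^{\Omega(m)}$; this settles the bookkeeping from the running-time part.

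Combining the two bounds gives $L=2^{\Theta(m\log m)}=m^{\Theta(m)}$ in the worst case, and hence a super-exponential running time $\Theta(L^2)=m^{\Theta(m)}$.
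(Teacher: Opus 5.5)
Your proposal is correct and follows the paper's proof essentially step for step. You use the same three-part split (running time, a universal upper bound via the unconstrained count $\sum_{s}\binom{m}{s}(2s-3)!!$, and a lower bound on the two-node multigraph $G_m$ where compatibility and edge-disjointness are vacuous). Your separation of appended from materialized elements is, if anything, more careful than the paper, which picks the $s=m-1$ term for exactly this reason but then says the outer loop traverses a list of length $L$. Just keep the factor $m$ when you count the $m\,(2m-5)!!$ appended $(m-1)$-leaf trees: set against the $(2m-3)!!<2m\,(2m-5)!!$ diverted full trees, this gives $L<3\lvert L_{\mathrm{app}}\rvert$, and hence the exact $\Theta(L^{2})$ on $G_m$ rather than only $m^{\Theta(m)}$.
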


\begin{proof}
\emph{Lower bound.}
Consider the two-node multigraph $G_{m}$ consisting of nodes $u,v$ joined by $m$
parallel edges. We claim that on $G_{m}$ both restrictions imposed by the algorithm,
compatibility and edge-disjointness, are vacuous, so that the unconstrained count of
Lemma~\ref{lem:doublefact} is \emph{attained} rather than merely an over-estimate.

\emph{Compatibility.} Every virtual edge constructible on $G_{m}$ joins $u$ to $v$.
Indeed, each physical link joins $u$ to $v$, and distilling two $u$--$v$ virtual
edges again yields a $u$--$v$ virtual edge, so the claim follows by induction on the
number of leaves. Consequently any two virtual edges share \emph{both} endpoints, so
\texttt{purify} applies to every pair and no pair is ever rejected as incompatible,
while \texttt{swap}, which requires exactly one shared endpoint, never applies.

\emph{Edge-disjointness.} In a tree, the two subtrees rooted at the children of an
internal node have disjoint leaf sets. Since the leaves are distinct physical links,
the edge sets consumed by the two operands of every internal node are disjoint
automatically, and the disjointness test never rejects a combination.

Hence $\Tree(G_{m})$ is exactly the set of distillation trees over subsets of the
$m$ links, and by Lemma~\ref{lem:doublefact} the trees on a given subset of size $s$
number precisely $T(s)=(2s-3)!!$, so that
$L=\sum_{s=1}^{m}\binom{m}{s}T(s)$. Trees built over distinct leaf sets are
themselves distinct, so the terms of this sum count pairwise disjoint families and
any single term is already a lower bound for $L$. We take $s=m-1$, for which no
further justification is needed: every \emph{proper} subset of the $m$ links remains
in the working list, only the full-$m$-edge trees being diverted (and even those are
constructed). This yields
\begin{equation}
  L \;\ge\; \binom{m}{m-1}\,T(m-1)\;=\;m\,(2m-5)!! ,
\end{equation}
the factor $\binom{m}{m-1}=m$ being the choice of which link to omit and
$T(m-1)=(2m-5)!!$ the number of distillation trees over the retained links.
By Stirling's approximation,
$(2m-5)!! = \dfrac{(2m-4)!}{2^{m-2}(m-2)!} = \exp\!\big(\Theta(m\log m)\big)$,
hence $L = 2^{\Omega(m\log m)}$.

\smallskip
\emph{Upper bound.} For an arbitrary graph the compatibility and disjointness
restrictions can only \emph{remove} trees, so $\Tree(G)$ is contained in the set of
all full binary trees over subsets of $\edges$ and the unconstrained count of
Lemma~\ref{lem:doublefact} over-estimates $L$ in the safe direction. Summing over
all leaf-subsets,
\begin{equation}
  L \;\le\; \sum_{s=1}^{m}\binom{m}{s}\,T(s)
     \;\le\; m\cdot 2^{m}\cdot (2m)^{m}
     \;=\; 2^{\bigO(m\log m)},
\end{equation}
using $\binom{m}{s}\le 2^{m}$ and $(2s-3)!!\le (2m)^{m}$. Therefore
$L = 2^{\Theta(m\log m)} = m^{\Theta(m)}$.

\smallskip
\emph{Running time.} The outer loop traverses the final list of length $L$. For the
$i$-th element, the inner loop performs $i$ disjointness tests with no early
termination, giving $\sum_{i<L} i = \Theta(L^{2})$ bitmask tests, each of constant
cost. The successful combinations number exactly $L-m$ and each is evaluated in
constant time. Hence the total time is $\Theta(L^{2}) = 2^{\Theta(m\log m)}$, assuming $O(1)$ complexity for the fidelity calculations and constraints checking for each pair.
\end{proof}

\paragraph{Instance-Dependent Behavior}

The worst case of \cref{thm:brute} is reached on $G_m$ because the two restrictions
of \cref{def:optree}, compatibility and edge-disjointness, are vacuous there. On
other topologies they bite, and the number of operation trees collapses
accordingly. \Cref{prop:star_path} makes precise the two other points of the range
announced at the start of this section: on a star the constraints admit only trees
of depth one, and on a path only the bracketings of a contiguous segment.

\begin{proposition}[Star and path]
\label{prop:star_path}
Let $S_m$ be the star with $m$ edges and $P_m$ the simple path with $m$ edges.
Then
\begin{equation}
  L(S_m)\;=\;m+\binom{m}{2}\;=\;\Theta(m^{2}),
  \qquad
  L(P_m)\;=\;2^{\Theta(m)} ,
\end{equation}
so the algorithm runs in $\Theta(m^{4})$ time on the star and in $2^{\Theta(m)}$
time on the path.
\end{proposition}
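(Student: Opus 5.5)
We prove \cref{prop:star_path} by characterizing $\Tree(S_m)$ and $\Tree(P_m)$ directly from \cref{def:optree}, counting them, and then converting the count $L$ into running time through \cref{thm:brute}, whose $\Theta(L^{2})$ bound holds for every graph. For the star this gives $\Theta((m^{2})^{2})=\Theta(m^{4})$. For the path it gives $(2^{\Theta(m)})^{2}=2^{\Theta(m)}$. By \cref{lem:exhaust}, $L$ is exactly $|\Tree(G)|$, so only the two counts remain.

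\emph{Star.} Let the centre be $c$ and the leaves $x_1,\dots,x_m$, with edge $e_i=(c,x_i)$. There are no parallel edges, so the first combination of two physical edges must be a swap. Two edges $e_i,e_j$ share exactly the node $c$, and swapping them gives a virtual edge with ends $\{x_i,x_j\}$. We then show that no such virtual edge combines with anything. Any other virtual edge $t$ is either a physical $e_k$ or a swap $\{x_k,x_l\}$, and its ends must meet $\{x_i,x_j\}$ in at least one node. If $t=e_k$, this requires $k\in\{i,j\}$, which violates disjointness. If $t=\{x_k,x_l\}$, it requires $\{k,l\}\cap\{i,j\}\neq\emptyset$, which again reuses an edge. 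Hence $\Tree(S_m)$ consists of the $m$ single edges and the $\binom m2$ depth-one swaps, so $L(S_m)=m+\binom m2$.

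\emph{Path.} Label the nodes $0,\dots,m$ and the edges $e_i=(i-1,i)$. We prove by induction on the number of leaves that every $t\in\Tree(P_m)$ consumes a contiguous segment $e_a,\dots,e_b$ and has ends $\{a-1,b\}$.
\begin{itemize}
\item Purification never applies. Two edge-disjoint contiguous segments cannot have the same end pair, since equal ends force equal segments.
\item A swap applies exactly when the two segments are adjacent, i.e.\ they share exactly one end and are disjoint.
\end{itemize}
So trees on a segment of length $s$ are in bijection with the bracketings of $s$ ordered factors, of which there are $C_{s-1}$ (a Catalan number). This gives
\[
L(P_m)=\sum_{s=1}^{m}(m-s+1)\,C_{s-1}.
\]
For the bounds, the lower bound is $L\ge C_{m-1}=2^{\Theta(m)}$, using $C_n=\Theta(4^n n^{-3/2})$. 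The upper bound is $L\le m^{2}\,4^{m}=2^{\bigO(m)}$.

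The main obstacle is the closure step in each characterization. For the star, we must confirm that a swapped pair $\{x_i,x_j\}$ is genuinely dead, because the algorithm would allow any combination whose ends share a node. For the path, we must check that the contiguity invariant survives every allowed combination and that no purify case sneaks in. Both arguments reduce to the same comparison of end sets against the disjointness of consumed edges, so we will state them as one small inductive lemma.
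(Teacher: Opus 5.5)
Your proposal is correct and follows the paper's proof essentially step for step. For the star, you use the same case analysis showing that a swapped leaf pair $\{x_i,x_j\}$ can never be combined again: its ends are disjoint from those of a fresh $e_k$, and any other overlap would reuse an edge. For the path, you use the same contiguous-segment invariant leading to Catalan bracketings and $L(P_m)=\sum_{s}(m-s+1)C_{s-1}$, and the running time follows from the $\Theta(L^{2})$ bound of \cref{thm:brute}. The only differences are cosmetic: you bound $C_{m-1}$ via its asymptotic $\Theta(4^{n}n^{-3/2})$ and use the looser upper bound $m^{2}4^{m}$, whereas the paper uses the elementary estimate $2^{j-1}\le C_j\le 4^{j}$ and obtains $m\,4^{m}$.
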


\begin{proof}
\emph{Star.} Write $c$ for the center and $\ell_1,\dots,\ell_m$ for the leaves of the original network graph, so
that every physical edge is $e_i=\{c,\ell_i\}$. Two physical edges share exactly
the center, hence swap into a virtual edge with $\mathrm{ends}=\{\ell_i,\ell_j\}$
and $\mathrm{used}=\{e_i,e_j\}$. They never share both endpoints, so
\texttt{purify} is never applicable at this level. Such a leaf--leaf virtual edge
admits no further combination. Against a physical edge $e_k$ with $k\notin\{i,j\}$
its endpoints $\{\ell_i,\ell_j\}$ and $\{c,\ell_k\}$ are disjoint, so the pair is
incompatible, while $k\in\{i,j\}$ is rejected by the disjointness test. Against
another leaf--leaf virtual edge $\{\ell_k,\ell_l\}$ edge-disjointness forces
$\{i,j\}\cap\{k,l\}=\emptyset$, which again leaves the endpoint sets disjoint.
Hence no tree with three or more leaves exists and $\Tree(S_m)$ consists of the
$m$ physical edges together with the $\binom{m}{2}$ two-leaf swaps.

\emph{Path.} Label the edges $e_1,\dots,e_m$ in order along the path. Every
virtual edge is a contiguous segment $e_i,\dots,e_j$: this holds for the physical
edges, and swapping two edge-disjoint segments that share a node concatenates them
into a segment, while two edge-disjoint segments never share both endpoints, so
\texttt{purify} never applies. A tree over a segment of $s$ edges is therefore
determined by recursively choosing where to cut it, i.e.\ by a binary bracketing
of $s$ ordered factors, of which there are the Catalan number
$C_{s-1}=\frac{1}{s}\binom{2s-2}{s-1}$. The index is shifted because $C_{j}$
counts the bracketings of $j+1$ factors, and a segment of $s$ edges contributes
$s$ leaves. A path of $m$ edges has exactly $m-s+1$ segments of length $s$ (one
per choice of the left index at which the segment starts), and trees over
distinct segments are distinct, so summing over $s$ and using
$2^{\,j-1}\le C_{j}\le 4^{\,j}$ gives
\begin{equation}
  2^{m-2}\;\le\;C_{m-1}\;\le\;L(P_m)\;=\;\sum_{s=1}^{m}(m-s+1)\,C_{s-1}
  \;\le\;m\sum_{s=1}^{m}4^{s-1}\;\le\;m\,4^{m},
\end{equation}
the lower bound being the single term $s=m$. Hence $L(P_m)=2^{\Theta(m)}$.

In both cases the running time is $\Theta(L^{2})$ by the argument at the end of
\cref{thm:brute}.
\end{proof}

The comparison with \cref{lem:doublefact} is the point: on a segment of $s$ edges
the linear order of the path fixes which subtrees may meet, leaving $C_{s-1}<4^{s}$
bracketings out of the $(2s-3)!!=s^{\Theta(s)}$ unordered shapes, whereas on $G_m$
every shape remains available. The star is the degenerate case in which even the second
level of the recursion is closed off.

\section{Complexity of FOLD}
\label{sec:heuristic_complexity}

FOLD (Fidelity-Optimizing Local Detours), the heuristic of \cref{sec:heuristic_algorithm},
proceeds in two phases: it computes the $k$ shortest loopless paths
by Yen's algorithm, then folds each successive path into a running solution,
merging a diverging segment whenever the two parallel sub-paths lie in the
purifiable region and the merge strictly increases the overall metric.
We analyze the two phases separately and add them up in \cref{thm:heur}. The
interesting part is the second phase, whose cost is not governed by the size of
the graph but by its \emph{circuit rank} $r(G)=m-n+c$, with $c$ the number of
connected components: as we show in \cref{lem:rank}, each successful merge
consumes one independent cycle of $G$, so at most $r(G)$ merges can ever occur.
As in \cref{thm:brute}, we assume throughout that a single fidelity evaluation,
$\Fpur$ or $\Fswap$, costs $\bigO(1)$.

\subsection{Phase 1: Yen's \texorpdfstring{$k$}{k}-shortest paths}

The candidate paths are produced by Yen's algorithm, which computes the $k$ shortest
loopless paths in
\begin{equation}
\label{eq:yen}
  T_{\mathrm{Yen}} \;=\; \bigO\!\big(k\,n\,(m+n\log n)\big)
\end{equation}
time on a graph with $n$ nodes and $m$ edges. Our implementation uses a binary heap
rather than a Fibonacci heap for the underlying shortest-path computation, which
replaces \cref{eq:yen} by $\bigO(k\,n\,m\log n)$.

\subsection{Phase 2: How Many Merges Can Succeed}

Phase~2 repeats \emph{improvement steps}: it scans the diverging segments of the
current candidate path against the running solution and applies the first merge
that is legal and profitable, until a full scan finds none
(\cref{alg:path_purification}). Bounding this phase therefore needs two
ingredients, the cost of one step and the number of steps. The second is the
delicate one, since the \textbf{repeat} loop carries no explicit iteration
budget: nothing in the pseudo-code says that merges cannot keep succeeding.
What stops them is a property of the graph rather than of the code, and this is
what we establish here.

\paragraph{Setup} Write $H_t=(V_t,U_t)$ for the sub-multigraph of $G$ spanned by
the physical edges $U_t\subseteq\edges$ that the running solution consumes after
$t$ successful merges. Initially $H_0$ is the first Yen path $p_1$. A successful merge acts
on a pair of diverging sub-paths $(p_1',p_2')\in\mathrm{Div}(p_1,P[i])$ running
between two \emph{anchor} nodes $a,b$ shared by the two paths: it replaces the
segment $p_1'$ of the running solution by the distillation of $p_1'$ with $p_2'$,
so the edges of $p_2'$ join the solution and $U_{t+1}=U_t\cup\mathrm{used}(p_2')$,
a disjoint union.

\paragraph{The idea} The old segment $p_1'$ and the new detour $p_2'$ join the
same two anchors and, by the disjointness test of \cref{alg:path_purification},
share no physical edge, so together they close a cycle of $G$. Every successful
merge closes such a cycle out of edges never used before, so distinct merges close
independent cycles. A graph contains only $r(G)=m-n+c$ independent cycles, hence
at most $r(G)$ merges can ever succeed, no matter how many candidate paths are
folded in, in what order, or with which fidelities. We now make this precise by
tracking the circuit rank of the running solution itself, which starts at $0$,
gains at least one unit per merge, and can never exceed $r(G)$.

\begin{lemma}[Circuit-rank bound on the number of merges]
\label{lem:rank}
The running solution satisfies $r(H_0)=0$, every successful merge satisfies
$r(H_{t+1})\ge r(H_t)+1$, and $r(H_t)\le r(G)$ holds throughout. Consequently an
entire execution of FOLD
performs at most $r(G)=m-n+c$ successful merges.
\end{lemma}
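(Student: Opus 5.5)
The proof tracks the circuit rank $r(H)=|U|-|V(H)|+c(H)$ of the sub-multigraph spanned by the consumed edges. There are three claims: the base case, the per-merge increment, and the monotone upper bound. The final count then follows by telescoping. Since $r(H_0)=0$, $r(H_t)\ge t$ after $t$ merges. Since $r(H_t)\le r(G)$, we get $t\le r(G)$.

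\emph{Base case.} $H_0$ is the first Yen path $p_1$. Yen returns loopless paths, so $p_1$ is a simple path with $\ell$ edges and $\ell+1$ distinct nodes, and it forms one component. Hence $r(H_0)=\ell-(\ell+1)+1=0$.

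\emph{Increment.} Fix a successful merge on diverging sub-paths $(p_1',p_2')$ with anchors $a,b$. By the disjointness test in \cref{alg:path_purification}, $\mathrm{used}(p_2')\cap U_t=\emptyset$, so $U_{t+1}=U_t\sqcup\mathrm{used}(p_2')$. Write $p_2'$ as $k\ge1$ new edges, with $j\le k-1$ interior nodes that may or may not already lie in $V_t$. I will add the edges of $p_2'$ one at a time along the path from $a$ to $b$. The key invariant is that adding a single edge to a multigraph changes $r$ by $+1$ if both endpoints are already in the same component, and by $0$ otherwise. In the second case, either a new node is created, or two components merge; in both, the $+1$ in $|U|$ is offset exactly. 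So $r$ never decreases.

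It remains to find one edge that raises $r$ by one. The anchors $a,b$ are both on the old segment $p_1'\subseteq H_t$, so they lie in the same component of $H_t$. As we walk from $a$, let $x$ be the first node of $p_2'$ after $a$ that already lies in the current component containing $a$. Such an $x$ exists because $b$ qualifies. The edge entering $x$ joins two nodes of one component, so it adds $+1$. This is simply the statement that $p_2'$ together with $p_1'$ closes a cycle made of fresh edges.

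One subtlety is a parallel-edge detour ($k=1$, $a$ and $b$ adjacent). That case is handled by the same rule, since the multigraph version of the edge-addition invariant covers it. Another is that $H_t$ may contain virtual edges. I will interpret $H_t$ strictly over the underlying physical edges, which the setup already does via $U_t\subseteq\edges$. I also need that $\mathrm{used}(p_1')$ connects $a$ to $b$ inside $H_t$. This holds because every virtual edge in the running solution is built from physical edges forming a connected subgraph between its endpoints, by induction over swaps and distillations.

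\emph{Upper bound.} For a subgraph $H\subseteq G$, $r(H)\le r(G)$. To see this, start from $H$ with $V(H)=V(G)$ by adding the missing nodes as isolated vertices, which leaves $r$ unchanged. Then add the remaining edges of $G$ one by one; by the edge-addition invariant $r$ is nondecreasing. Equivalently, $r$ is the dimension of the cycle space, and the cycle space of $H$ is a subspace of that of $G$.

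I expect the main obstacle to be the increment step. The detour may touch $H_t$ at interior nodes, so it does not literally close a single fresh cycle from $a$ to $b$. The edge-by-edge monotonicity argument with a first-return node is the version I will commit to, because it handles these touching points without case analysis.
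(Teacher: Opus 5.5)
Your proof is correct and follows the paper's argument: the same potential $r(H_t)=|U_t|-|V_t|+c(H_t)$, the same base case of a simple Yen path with $r(H_0)=0$, an increment of at least one per merge from a fresh edge-disjoint $a$--$b$ detour, and the same ceiling $r(H_t)\le r(G)$ via the cycle space. The only variation is in the increment: you add the detour edge by edge and locate a first-return edge, whereas the paper uses the aggregate count $\Delta r=|P|-\nu+0\ge 1$, which takes the component count to be unchanged because $H_t$ is connected; your version needs only that $a$ and $b$ share a component, and the two coincide here.
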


\begin{proof}
The circuit rank of a graph $H$ is the number of its independent cycles, that is,
the dimension of its cycle space, and it is counted for any graph by
\begin{equation}
\label{eq:rank}
  r(H)\;=\;\lvert\text{edges}(H)\rvert-\lvert\text{nodes}(H)\rvert
           +\lvert\text{components}(H)\rvert .
\end{equation}

\emph{Start.} $H_0$ is the shortest path $p_1$, a simple path and hence a tree:
it has one component and one edge fewer than it has nodes, so \cref{eq:rank}
gives $r(H_0)=0$.

\emph{One merge adds at least one independent cycle.} Consider the
$(t{+}1)$-th successful merge, on a diverging segment with anchors $a,b$. Both
anchors already belong to the running solution, so $a,b\in V_t$, and they lie in
the same connected component of $H_t$ because the running solution is a connected
$s$--$t$ structure. The merge adjoins the edge set $P=\mathrm{used}(p_2')$ of the
detour, which is non-empty and, by the disjointness test, contains no edge of
$U_t$. Being a simple $a$--$b$ path, $P$ contributes $\lvert P\rvert$ edges and
$\lvert P\rvert-1$ internal nodes, of which some number $\nu\le\lvert P\rvert-1$
are new to $H_t$ (the others, and the anchors themselves, are already in $V_t$).
Adjoining a path between two nodes of one existing component leaves the component
count unchanged, so each term of \cref{eq:rank} moves by a known amount:
\begin{equation}
  \Delta r \;=\; \underbrace{\lvert P\rvert}_{\Delta\,\text{edges}}
             \;-\; \underbrace{\nu}_{\Delta\,\text{nodes}}
             \;+\; \underbrace{0}_{\Delta\,\text{components}}
           \;\ge\; \lvert P\rvert-\big(\lvert P\rvert-1\big)\;=\;1 .
\end{equation}
The count covers every successful merge, including the degenerate one that leaves
the list of anchors unchanged and only raises the fidelity: it too adjoins a fresh
detour and hence spends a cycle.

\emph{Ceiling.} $H_t$ is an edge-subgraph of $G$, so its cycle space embeds in
that of $G$ and $r(H_t)\le r(G)$ for every $t$.

Combining, after $t$ successful merges $t\le r(H_t)\le r(G)$, which is the claim.
\end{proof}

\begin{remark}[Distillation is a phenomenon of the cycle space]
\label{rem:cyclespace}
\Cref{lem:rank} says that a merge \emph{spends} an independent cycle: the one
formed by the old $a$--$b$ segment of the running solution together with the fresh
parallel detour. Once the solution has absorbed the whole cycle space of $G$, no
edge-disjoint alternative route remains between any two of its nodes and folding
halts. The degenerate regime is instructive: a forest has $r=0$ and admits no
merge at all, which is exactly right, since between any two nodes a tree offers a
unique path and there is nothing to distill against. The bound is thus
topology-aware rather than size-aware. Near-tree networks admit almost no merges
however many candidate paths are examined, while dense or heavily parallel
networks, for which $r=\Theta(m)$, admit up to $\Theta(m)$ of them. It is the same
quantity that drives the super-exponential growth of the exhaustive scheme in
\cref{thm:brute}, where the witness $G_m$ has $r(G_m)=m-1$.
\end{remark}

\subsection{Cost of One Improvement Step and Total Running Time}

A single improvement step recomputes the fidelity of the running solution with
$\bigO(n)$ primitive evaluations, collects the identifiers of the edges it
consumes in $\bigO(m)$, and scans at most $n$ diverging segments, one per pair of
consecutive anchors. Each segment costs $\bigO(n)$, since the
purifiable-region and improvement tests sweep sub-paths of at most $n$ hops a
bounded number of times, at $\bigO(1)$ per hop, and splicing the segment in
requires locating each of its two anchors in both the running solution and the
candidate path, which is four list-index lookups, also at $\bigO(n)$. The index lookups dominate a short segment irrespective of its
length, and a position map would remove them, but we analyze the algorithm as
written. One improvement step therefore costs $\bigO(n^{2}+m)$.

By \cref{lem:rank} at most $r(G)$ improvement steps ever succeed. Each of the
$k-1$ folding passes additionally ends with a single scan that finds no
improvement and terminates the \textbf{repeat} loop, so the total number of
improvement steps over an execution is at most $r(G)+k$.

\begin{theorem}[Complexity of FOLD]
\label{thm:heur}
FOLD (\cref{alg:path_purification}) runs in
\begin{equation}
  T_{\mathrm{heur}}
  \;=\;
  \underbrace{\bigO\!\big(k\,n\,m\log n\big)}_{\text{Yen}}
  \;+\;
  \underbrace{\bigO\!\big((r+k)\,(n^{2}+m)\big)}_{\text{folding}},
  \qquad r=m-n+c .
\end{equation}
For a fixed path budget $k=\bigO(1)$ this is the topology-aware bound
\begin{equation}
  \boxed{\;T_{\mathrm{heur}} \;=\; \bigO\!\big(r\,(n^{2}+m) \;+\; n\,m\log n\big)\;},
\end{equation}
which is polynomial in the size of the network.
\end{theorem}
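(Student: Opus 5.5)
The bound is the sum of the two phases analysed above, so I would add them and then simplify under $k=\bigO(1)$.

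\emph{Phase~1.} Line~2 of \cref{alg:path_purification} runs Yen's algorithm once. With the binary-heap Dijkstra used by our implementation this costs $\bigO(k\,n\,m\log n)$ by \cref{eq:yen}. The weights $w_e$ are computed beforehand in $\bigO(m)$, which this term absorbs.

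\emph{Phase~2.} I would first bound the number of executions of the body of the \textbf{repeat} loop. Every execution either ends in a successful merge (the \textbf{break} branch) or is the final unsuccessful scan that terminates the loop for the current $i$. By \cref{lem:rank}, successful merges over the whole run number at most $r(G)$, however they are distributed across the outer iterations $i=2,\dots,k$. Each of the $k-1$ outer iterations contributes exactly one unsuccessful scan. So there are at most $r+k-1$ scans in total.

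Next comes the per-scan cost $\bigO(n^{2}+m)$, as argued in the preceding paragraph.
\begin{itemize}
\item Computing $\mathrm{used}(p_1)$, for example as a bitmask or hash set, takes $\bigO(m)$.
\item Computing $\mathrm{Div}(p_1,P[i])$ takes $\bigO(n)$ with a position map, or $\bigO(n^{2})$ by naive lookup. It yields at most $n$ segments, since anchors are distinct nodes of a loopless path.
\item For each segment, the disjointness test, the two path fidelities via \cref{eq:path_fidelity}, the purifiable-region test, \textsc{Merge}, and the recomputation of $F(p_{\mathrm{new}})$ cost $\bigO(n)$, because the running solution has $\bigO(n)$ virtual edges along its anchor chain.
\end{itemize}
Summing over the at most $n$ segments gives $\bigO(n^{2})$, so one scan costs $\bigO(n^{2}+m)$. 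Multiplying by the number of scans gives $\bigO((r+k)(n^{2}+m))$.

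\emph{Simplification.} Setting $k=\bigO(1)$ absorbs $k(n^{2}+m)$ into $n\,m\log n$ plus the $r$ term, provided $n^{2}=\bigO(nm\log n)$. That holds whenever $m\ge n-1$, i.e.\ on any connected graph, which we may assume since otherwise $s$ and $t$ can be restricted to their component. The result is the boxed bound, which is polynomial because $r\le m$.

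\emph{Main obstacle.} The delicate point is the per-segment cost. A virtual edge produced by a merge must still be evaluated in $\bigO(1)$, and the recomputed end-to-end fidelity must stay $\bigO(n)$ even after merges nest virtual edges inside one another. I would handle this by storing each virtual edge's fidelity at creation, so that $F(p_1)$ is a product over at most $n$ stored values. I would also note that the termination and count argument rests entirely on \cref{lem:rank}, which already covers merges spread across different candidate paths.
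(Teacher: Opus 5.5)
Your proposal is correct and matches the paper's proof essentially step for step: Yen's algorithm in its binary-heap form for Phase~1, at most $r$ successful merges by \cref{lem:rank} plus one terminating scan per outer iteration, and an $O(n^{2}+m)$ per-scan cost from $O(m)$ edge bookkeeping plus at most $n$ segments at $O(n)$ each. Your connectivity assumption, used to absorb the $k\,n^{2}$ term when $k=O(1)$, is a detail the paper leaves implicit, and the $O(n)$ final collapse to a single $s$--$t$ edge, which the paper mentions and you omit, is trivially absorbed.
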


\begin{proof}
Add the Phase-1 bound of \cref{eq:yen}, in the binary-heap form
$\bigO(k\,n\,m\log n)$ used by our implementation, to the Phase-2 cost, which is
the number of improvement steps, at most $r+k$, times the per-step cost
$\bigO(n^{2}+m)$. The final reduction of the running solution to a single $s$--$t$
virtual edge costs a further $\bigO(n)$ and is absorbed. Setting $k=\bigO(1)$
leaves the two displayed terms. With a Fibonacci-heap shortest-path routine the
Yen term would instead read $\bigO(nm+n^{2}\log n)$, by \cref{eq:yen}.
\end{proof}

\begin{corollary}[Worst case]
\label{cor:dense}
On a simple graph, where $m=\bigO(n^{2})$, and with a fixed path budget,
\cref{thm:heur} gives $T_{\mathrm{heur}}=\bigO(m\,n^{2})$, since $r\le m$ and
$n\,m\log n=\bigO(m\,n^{2})$. The bound is attained up to constants when
$r=\Theta(m)$, i.e.\ on dense or heavily parallel networks. Conversely, on
near-tree networks, where $r=\bigO(1)$, the cost is Yen-dominated,
$T_{\mathrm{heur}}=\bigO(n\,m\log n)$, with no $m\,n^{2}$ term at all.
\end{corollary}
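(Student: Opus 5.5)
The corollary specializes \cref{thm:heur}, so the plan is to substitute the simple-graph and near-tree bounds into $T_{\mathrm{heur}}=\bigO\!\big(r(n^{2}+m)+n\,m\log n\big)$ and bound each of the two terms separately. I will assume throughout that $G$ is connected, or at least that $m\ge n-c$ with $c\le n$, so that isolated nodes do not dominate $m$.

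\emph{Upper bound on a simple graph.}
\begin{itemize}
\item From $r=m-n+c$ and $c\le n$ we get $r\le m$.
\item Simplicity gives $m\le\binom n2$, so $m=\bigO(n^{2})$.
\item The folding term then satisfies $r(n^{2}+m)\le m(n^{2}+\bigO(n^{2}))=\bigO(mn^{2})$.
\item The Yen term satisfies $n\,m\log n\le n\,m\cdot n=mn^{2}$, using $\log n\le n$.
\end{itemize}
Adding the two gives $T_{\mathrm{heur}}=\bigO(mn^{2})$.

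\emph{The claim ``attained up to constants when $r=\Theta(m)$''.} I read this as tightness of the upper bound, not a lower bound on the algorithm's actual running time. When $r\ge\alpha m$, the folding term bound $r(n^{2}+m)$ is itself at least $\alpha m n^{2}$, so the bound in \cref{thm:heur} is $\Theta(mn^{2})$.

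A genuine lower bound on running time would need two more facts:
\begin{itemize}
\item an instance where $\Theta(r)$ merges actually succeed;
\item each improvement step really costing $\Omega(n^{2})$, namely the list-index lookups over $n$ anchors.
\end{itemize}
This is the step I expect to be the main obstacle, since \cref{lem:rank} gives only an upper bound on merges. If a concrete witness is wanted, I would try a family of long paths with many short parallel detours of near-equal fidelity inside the purifiable region. In such a family each Yen candidate triggers one merge and the anchor lists stay length $\Theta(n)$. I would otherwise state the claim at the level of the bound.

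\emph{Near-tree networks.} With $r=\bigO(1)$, the folding term is $\bigO(n^{2}+m)$. It remains to check that $n^{2}=\bigO(nm\log n)$. A connected graph has $m\ge n-1$, so $nm\log n=\Omega(n^{2}\log n)$, which dominates $n^{2}$. Also $m\le nm\log n$ for $n\ge2$. Hence $T_{\mathrm{heur}}=\bigO(nm\log n)$, and no $mn^{2}$ term remains.
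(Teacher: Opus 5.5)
Your argument is correct and is the paper's own inline justification spelled out. You use $r\le m$, $m=\bigO(n^{2})$ and $\log n\le n$ for the simple-graph bound, and $r=\bigO(1)$ together with the connectivity that the paper's Yen bound already presumes for the near-tree case; reading ``attained up to constants'' as tightness of the bound expression is exactly as much as the paper supports.

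Drop the witness, though. With $k$ fixed, every successful merge consumes a fresh edge of the current candidate $P[i]$, so at most $(k-1)(n-1)$ merges ever succeed, and your family yields only $k-1$ of them. On dense graphs ($m=\Theta(n^{2})$) the true cost is therefore $\bigO(n^{3}+n\,m\log n)=o(m\,n^{2})$, and no instance realizes $\Theta(r)$ merges there.
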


This is the polynomial complexity quoted in \cref{sec:heuristic_algorithm}, and it
stands in contrast to the $2^{\Theta(m\log m)}$ of \cref{thm:brute}: FOLD
examines at most one merge per independent cycle along at most $k$ candidate paths,
whereas the exhaustive scheme traverses the entire lattice of operation trees.

\subsection{Cost of the Path Budget}

The path budget $k_{\max}$ is FOLD's main free parameter, so we close by
tracing what it costs. \cref{thm:heur} already locates the whole dependence. The
budget enters the additive Yen term and the at most $k_{\max}$ terminating scans,
and nowhere else. It never reaches the folding work, which \cref{lem:rank} caps at
$r(G)$ successful merges no matter how many candidates are folded in. Raising the
budget therefore supplies more candidate paths without enlarging the expensive
phase, and the running time grows gently rather than in proportion to $k_{\max}$.
Timing FOLD across a range of budgets, on the Erd\H{o}s--R\'enyi ensemble
of \cref{fig:heuristic_benchmark}, bears this out (\cref{fig:kmax_runtime}). A
generous budget is cheap, which is what lets FOLD search many routes for
little more than the price of one.

\begin{figure}[H]
  \centering
  \includegraphics[width=\columnwidth]{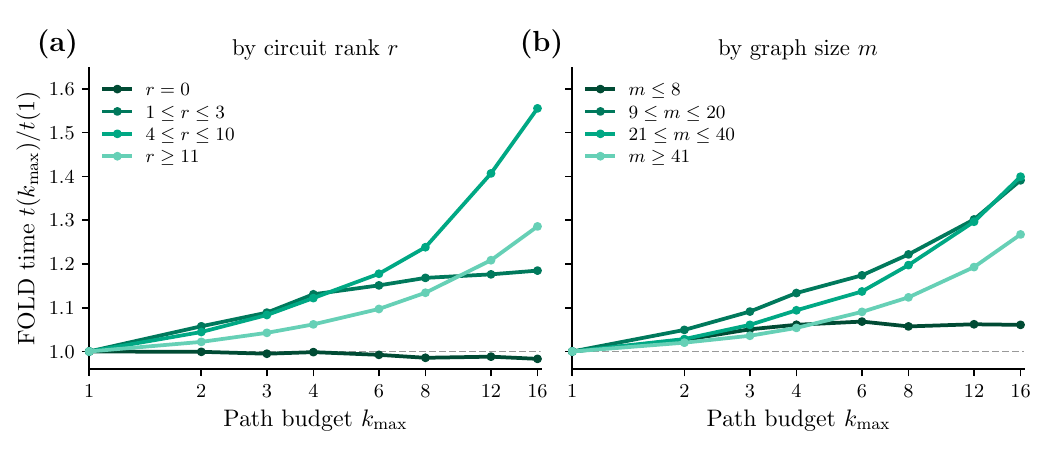}
  \caption{\textbf{The path budget is cheap.} Each Erd\H{o}s--R\'enyi instance of
  \cref{fig:heuristic_benchmark} is solved by FOLD at every budget
  $k_{\max}$, and the curves show the median across instances of the per-graph
  ratio $t(k_{\max})/t(1)$, the running time relative to a single shortest path.
  \textbf{(a)} Instances grouped by circuit rank $r=m-n+c$, the quantity that
  bounds the number of merges. \textbf{(b)} The same runs grouped by the number of
  physical edges $m$. Increasing the budget sixteen-fold changes the median run by
  only about a third, and on trees ($r=0$), where no merge is possible, the time
  is flat. The budget feeds the $k$-shortest-paths search but not the
  rank-bounded merging phase (\cref{thm:heur}), so its cost stays mild. Absolute
  times are milliseconds, with a median of $30$ ms at $k_{\max}=1$.}
  \label{fig:kmax_runtime}
\end{figure}

\section{In What Order to Fold: Distillation Schedules}
\label{sec:distillation_order}

Distillation is defined for two states but is applied to several, so a schedule must be chosen. The three natural ones are drawn in \cref{fig:distillation_strategy}: \emph{parallel} distills all disjoint pairs in a single round, \emph{sequential} keeps one working state and purifies it against each remaining state in turn, and \emph{tournament} combines the states in balanced rounds. The question is not new. \citet{vanmeter2009} analyze exactly this choice for long-line repeaters, contrasting entanglement pumping with symmetric schedules and introducing \emph{banded} purification, and \citet{goodenough2021} later carry both ideas into their protocol search, with scheduling work on linear networks continuing to optimize the ordering directly~\citep{koutsopoulos2024,wang2023scheduling}.

Those results are derived for a repeater chain fed by a continuous flow of fresh pairs. \cref{fig:distillation_strategy}(d) shows that the same ranking is reproduced in the one-shot model, where nothing is replenished and a schedule must work with the states already on hand: sweeping our simulator across the three schedules puts sequential pumping ahead by a wide margin, while the rule used to choose which edges to distill has a negligible effect. We therefore adopt the pumping order rather than re-derive it.

\providecommand{\dpair}[4]{
  \draw[qrDark, line width=0.6pt] (#2,#3) -- ++(0.52,0);
  \fill[qrDark] (#2,#3) circle (1.25pt);
  \fill[qrDark] (#2+0.52,#3) circle (1.25pt);
  \coordinate (#1l) at (#2,#3);
  \coordinate (#1r) at (#2+0.52,#3);
  \node[font=\scriptsize, text=qrDark, anchor=south, inner sep=1.5pt]
       at (#2+0.26,#3) {#4};
}

\begin{figure}[H]
  \centering
  \begin{subfigure}[t]{0.20\columnwidth}
    \centering
    \begin{tikzpicture}[scale=1.2,
                          dist/.style={draw=qrMid, line width=0.6pt, dashed,
                                       -{Stealth[length=1.4mm]}}]
      \dpair{a}{0}{0}{$e_1$}
      \dpair{b}{0}{-0.6}{$e_2$}
      \dpair{c}{0}{-1.2}{$e_3$}
      \dpair{d}{0}{-1.8}{$e_4$}
      \dpair{u}{1.15}{-0.3}{}
      \dpair{v}{1.15}{-1.5}{}
      \draw[dist] (ar) -- (ul);
      \draw[dist] (br) -- (ul);
      \draw[dist] (cr) -- (vl);
      \draw[dist] (dr) -- (vl);
    \end{tikzpicture}
    \caption{Parallel}
    \label{fig:dist_parallel}
  \end{subfigure}
  \hfill
  \begin{subfigure}[t]{0.45\columnwidth}
    \centering
    \begin{tikzpicture}[scale=1.2,
                          dist/.style={draw=qrMid, line width=0.6pt, dashed,
                                       -{Stealth[length=1.4mm]}}]
      \dpair{a}{0}{0}{$e_1$}
      \dpair{b}{0}{-0.6}{$e_2$}
      \dpair{c}{0}{-1.2}{$e_3$}
      \dpair{d}{0}{-1.8}{$e_4$}
      \dpair{u}{1.15}{-0.3}{}
      \dpair{v}{2.30}{-0.75}{}
      \dpair{w}{3.45}{-1.275}{}
      \draw[dist] (ar) -- (ul);
      \draw[dist] (br) -- (ul);
      \draw[dist] (ur) -- (vl);
      \draw[dist] (cr) -- (vl);
      \draw[dist] (vr) -- (wl);
      \draw[dist] (dr) -- (wl);
    \end{tikzpicture}
    \caption{Sequential}
    \label{fig:dist_sequential}
  \end{subfigure}
  \hfill
  \begin{subfigure}[t]{0.32\columnwidth}
    \centering
    \begin{tikzpicture}[scale=1.2,
                          dist/.style={draw=qrMid, line width=0.6pt, dashed,
                                       -{Stealth[length=1.4mm]}}]
      \dpair{a}{0}{0}{$e_1$}
      \dpair{b}{0}{-0.6}{$e_2$}
      \dpair{c}{0}{-1.2}{$e_3$}
      \dpair{d}{0}{-1.8}{$e_4$}
      \dpair{u}{1.15}{-0.3}{}
      \dpair{v}{1.15}{-1.5}{}
      \dpair{w}{2.30}{-0.9}{}
      \draw[dist] (ar) -- (ul);
      \draw[dist] (br) -- (ul);
      \draw[dist] (cr) -- (vl);
      \draw[dist] (dr) -- (vl);
      \draw[dist] (ur) -- (wl);
      \draw[dist] (vr) -- (wl);
    \end{tikzpicture}
    \caption{Tournament}
    \label{fig:dist_tournament}
  \end{subfigure}

  \vspace{1em}
  \begin{subfigure}[t]{0.85\columnwidth}
    \centering
    \includegraphics[width=\linewidth]{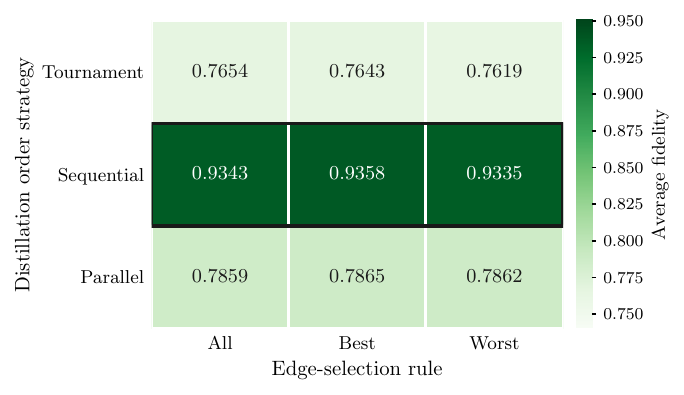}
    \caption{Average delivered fidelity}
    \label{fig:dist_matrix}
  \end{subfigure}

  \caption{\textbf{The three schedules for distilling several parallel states,
  and how they compare.} Each entangled state is drawn as two endpoints joined by a line,
  and each dashed arrow is one two-to-one distillation.
  \textbf{(a)} \emph{Parallel} distills all disjoint pairs in a single
  simultaneous round, leaving several states rather than one.
  \textbf{(b)} \emph{Sequential} keeps one working state and purifies it against
  each remaining state in turn, the entanglement-pumping order of
  \citet{vanmeter2009}.
  \textbf{(c)} \emph{Tournament} combines the states in balanced rounds, pairing
  those of similar fidelity.
  \textbf{(d)} Average fidelity of the distilled state for each schedule (rows)
  and edge-selection rule (columns), obtained from the measured negativity via
  the Werner relation $F=N+\tfrac12$ with failed distillations counted as
  $F=\tfrac12$, so that the score charges for the probability of success. On
  this measure the sequential schedule delivers the highest average fidelity in
  our one-shot model, reproducing the ranking \citet{vanmeter2009}
  establish for repeater chains supplied with fresh pairs, whereas the
  edge-selection rule has a negligible effect.}
  \label{fig:distillation_strategy}
\end{figure}

One qualification is worth stating, because it shapes FOLD. The sweep scores each schedule by the fidelity actually delivered, charging a failed distillation as $F=\tfrac12$, so it rewards a schedule that reaches a usable state reliably. Under the fidelity-only objective of \cref{sec:model}, which sets success probability aside, the three schedules lie much closer together and the ranking can invert when the states being combined are \emph{equal}: distilling four identical $A$--$B$ links in balanced rounds gives $F=0.919$, against $F=0.911$ for pumping them in one at a time. The two statements are not in conflict, pumping being the robust choice and a balanced order the fidelity-optimal one for equal inputs, but the distinction matters here because an algorithm that maintains a single running solution and folds candidates into it is, structurally, a pumping schedule.
\SupplementalReferences

\end{document}